\newtheorem{definition}{Definition}
\newtheorem{theorem}{Theorem}
\newcommand{\mb}[1]{\mathbf{#1}}
\newcommand{\argmax}{\mathop{\rm arg~max}\limits}
\def\BibTeX{{\rm B\kern-.05em{\sc i\kern-.025em b}\kern-.08em
    T\kern-.1667em\lower.7ex\hbox{E}\kern-.125emX}}
\begin{document}
\receiveddate{XX Month, XXXX}
\reviseddate{XX Month, XXXX}
\accepteddate{XX Month, XXXX}
\publisheddate{XX Month, XXXX}
\currentdate{XX Month, XXXX}
\doiinfo{OJIM.2022.1234567}

\title{Multi-channel Sampling on Graphs and \\Its Relationship to Graph Filter Banks}

\author{JUNYA HARA$^1$, YUICHI TANAKA$^{2,3}$}
\affil{Department of Electrical Engineering and Computer Science, Tokyo University of Agriculture and Technology, Tokyo, 184--8588 Japan}
\affil{Graduate School of Engineering, Osaka University, Osaka, 565--0871 Japan}
\affil{PRESTO, Japan Science and Technology Agency, Saitama, 332--0012 Japan}
\corresp{CORRESPONDING AUTHOR: Junya Hara (e-mail: jhara@msp-lab.org).}
\authornote{This work is supported in part by JST PRESTO (JPMJPR1935) and JSPS KAKENHI (20H02145, 22J22176).}
\markboth{MULTI-CHANNEL SAMPLING ON GRAPHS AND ITS RELATIONSHIP TO GRAPH FILTER BANKS}{HARA AND TANAKA}

\begin{abstract}
In this paper, we consider multi-channel sampling (MCS) for graph signals. 
We generally encounter full-band graph signals beyond the bandlimited ones in many applications, such as piecewise constant/smooth graph signals and union of bandlimited graph signals. Full-band graph signals can be represented by a mixture of multiple signals conforming to different generation models. 
This requires the analysis of graph signals via multiple sampling systems, i.e., MCS, while existing approaches only consider single-channel sampling.
We develop a MCS framework based on generalized sampling.
We also present a sampling set selection (SSS) method for the proposed MCS so that the graph signal is best recovered.
Furthermore, we reveal that existing graph filter banks can be viewed as a special case of the proposed MCS.
In signal recovery experiments, the proposed method exhibits the effectiveness of recovery for full-band graph signals.

\end{abstract}
\begin{IEEEkeywords}
Multi-channel sampling, full-band graph signals, sampling set selection.
\end{IEEEkeywords}

\maketitle

\section{INTRODUCTION}
\label{sec:intro}

\IEEEPARstart{G}{raph} signal processing (GSP) is a fundamental theory for analyzing graph-structured data, i.e., graph signals \cite{ortega_graph_2018}. 
Sampling of graph signals is one of the central research topics in GSP \cite{tanaka_sampling_2020}.
Sampling theory for graph signals, hereafter we call it \textit{graph sampling theory}, can be applied to various applications, including sensor placement \cite{sakiyama_eigendecomposition-free_2019}, traffic monitoring \cite{chen_monitoring_2016}, semi-supervised learning \cite{gadde_active_2014,anis_sampling_2019}, and graph filter bank designs \cite{hammond_wavelets_2011,narang_compact_2013,tanaka_m_2014,sakiyama_two-channel_2019,narang_perfect_2012,sakiyama_oversampled_2014,tay_techniques_2015,tay_critically_2017,tay_bipartite_2017,tremblay_subgraph-based_2016,pavez_two_2022}.

Most studies on graph sampling theory focus on the bandlimited graph signal model as an analog of the classical sampling theory for time-domain signals \cite{chen_discrete_2015,tsitsvero_signals_2016,anis_efficient_2016,marques_sampling_2016}.
However, we often encounter full-band graph signals in many applications. For example, piecewise smooth graph signals and multi-band graph signals are classified into full-band signals.
While some works study graph signal sampling beyond the bandlimited model \cite{hara_graph_2022,tanaka_generalized_2020,hara_sampling_2022}, they consider signals under one signal model: Signals with the mixture of two or more signal models cannot be recovered properly.

Generally, full-band signals can be represented by a mixture of multiple signals conforming to different generation models. 
For recovering such signals, we need to consider multiple sampling systems, i.e., \textit{multi-channel sampling} (MCS), where its single-channel sampling and recovery correspond to one signal model. 

For standard time-domain signals, MCS has been studied as the Papoulis' sampling theorem \cite{papoulis_generalized_1977}: The ideally-bandlimited signal can be recovered from samples obtained by MCS  with arbitrary $M$ sampling methods, 
e.g., non-uniform sampling and bandpass sampling. 
Later, it was extended into the full-band case \cite{djbkovic_generalized_1997,christensen_generalized_2005,aldroubi_oblique_1996}. 
This can be viewed as a special case of generalized sampling \cite{eldar_sampling_2015}.
It is composed of sampling, correction, and reconstruction transforms. Sampling and reconstruction transforms can be arbitrarily chosen while the correction transform compensates for their non-ideal behaviors, ensuring that the reconstructed signal is in some sense close to the original signal.
From a generalized sampling perspective, MCS can be viewed as one of the possible sampling transforms.
While numerous works have been presented for time-domain MCS \cite{brown_multi-channel_1981,eldar_filterbank_2000,djbkovic_generalized_1997,christensen_generalized_2005,aldroubi_oblique_1996},
there has been no approach to MCS in the graph setting in spite of having various full-band graph signals in many applications. 

In this paper, 
we consider MCS for GSP to recover full-band graph signals.
The proposed MCS is derived from the above-mentioned generalized sampling by extending the sampling transform for the graph setting \cite{eldar_sampling_2015}.
We also design the sampling transform for MCS on graphs. It requires the selection of a subset of vertices, i.e., sampling set selection (SSS). We select the sampling set such that graph signals are best recovered.
One can notice that MCS is related to filter banks.
In fact, sampling of full-band graph signals has been studied in a different line of research: Graph filter bank (GFB) designs \cite{hammond_wavelets_2011,narang_compact_2013,tanaka_m_2014,sakiyama_two-channel_2019,narang_perfect_2012,sakiyama_oversampled_2014,tay_techniques_2015,tay_critically_2017,tay_bipartite_2017,tremblay_subgraph-based_2016,pavez_two_2022}.
GFBs are composed of multiple (typically low- and high-pass) graph filters and down- and up-sampling operators, which are also components in MCS. 
Typically, perfect reconstruction (PR) GFBs are designed based on the properties of the given graph operator (e.g., adjacency matrix or graph Laplacian).

Bipartite graph filter banks (BGFB) are one of the PR GFBs and they are designed so that graph signals on bipartite graphs are perfectly recovered \cite{narang_compact_2013,narang_perfect_2012,sakiyama_oversampled_2014,tay_techniques_2015,tay_critically_2017,tay_bipartite_2017}.
While BGFBs can satisfy several desirable properties of GFB, they have two major limitations: 1) their PR property is limited to signals on the bipartite graph, and 2) 
BGFBs as well as many GFBs require the eigendecomposition of the graph operator to implement analysis and synthesis filters \cite{narang_perfect_2012,sakiyama_oversampled_2014,tay_techniques_2015,tay_critically_2017,tay_bipartite_2017}\footnote{There are few exceptions like methods in \cite{narang_compact_2013,tremblay_subgraph-based_2016}, but they typically need careful filter designs.}, which is computationally expensive for large graphs.

The proposed method overcomes the limitations of BGFBs by viewing GFBs as a special case of MCS for graph signals:
MCS can guarantee PR for arbitrary graph signals independent of the graph operator. Therefore, it does not require the graph simplification (typically bipartition). Furthermore, MCS allows for the use of arbitrary graph filters and down- and up- sampling operators. 
Our MCS can be implemented without eigendecomposition by utilizing polynomial filters for both analysis and synthesis, while many existing methods require eigendecomposition for achieving PR.
Recovery experiments demonstrate that the proposed method outperforms exiting GFBs.

\textit{Notation:} Bold lower and upper cases represent a vector and matrix, respectively.
We denote $\ell_2$ and spectral norms by $\|\cdot\|$ and $\|\cdot\|_2$, respectively. 
$\mb{A}_{\mathcal{XY}}$ and $\mb{A}_{\mathcal{X}}$ denote submatrices of $\mb{A}$ indexed by $\mathcal{X}$ and $\mathcal{Y}$, and $\mathcal{X}$ and $\mathcal{X}$, respectively. $\mb{A}_{xy}$ denotes the $(x,y)$th element of $\mb{A}$. $\mb{A}^\mathsf{T}$ denotes the transpose of $\mb{A}$. $\mathcal{M}^c$ denotes the complement set of $\mathcal{M}$. $|\mathcal{M}|$ represents the cardinality of the set $\mathcal{M}$. We denote the Kronecker delta function centered at the $y$th element by $\bm{\delta}_y$.

We consider a weighted undirected graph $\mathcal{G}=(\mathcal{V,E})$,
 where $\mathcal{V}$ and $\mathcal{E}$ represent sets of vertices and edges, respectively. The number of vertices is $N = |\mathcal{V}|$ unless otherwise specified. The adjacency matrix of $\mathcal{G}$ is denoted by $\mb{W}$ where its $(m,n)$-element $\mb{W}_{mn}\geq 0$ is the edge weight between the $m$th and $n$th vertices; $\mb{W}_{mn}=0$ for unconnected vertices. The degree matrix $\mathbf{D}$ is defined as $\mathbf{D}=\text{diag}\,(d_{0},d_{1},\ldots,d_{N-1})$, where $d_{m}=\sum_n\mb{W}_{mn}$ is the $m$th diagonal element.
We use graph Laplacian $ \mathbf{L}\coloneqq \mathbf{D}-\mb{W}$ as a graph operator. A graph signal $\bm{x} \in \mathbb{R}^N$ is defined as a mapping from the vertex set to the set of real numbers, i.e., $x[n]:\mathcal{V}\rightarrow \mathbb{R}$.

The graph Fourier transform (GFT) of $\bm{x}$ is defined as
$
\hat{\bm{x}}=\mb{U}^\mathsf{T}\bm{x}
$
where the GFT matrix $\mb{U}$ is 
obtained by the eigendecomposition of the graph Laplacian $\mathbf{L}=\mathbf{U\Lambda U}^\mathsf{T}$ with the eigenvalue matrix $\bm{\Lambda}=\mathrm{diag}\,(\lambda_0,\lambda_1,$ $\ldots,\lambda_{N-1})$. We refer to $\lambda_i$ as a \textit{graph frequency}.

\begin{figure}[t!]
\centering
  \includegraphics[width=\columnwidth]{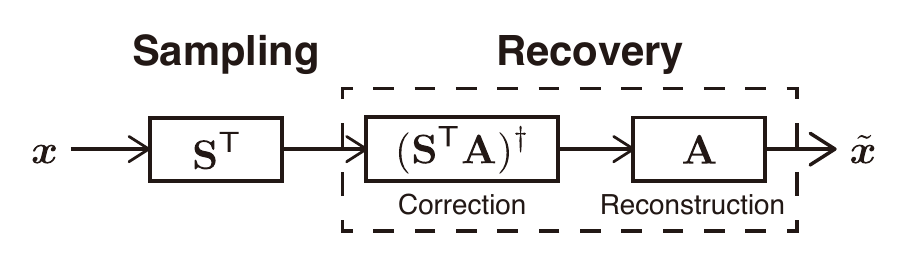}\vspace{-1em}
  \caption{The framework of single-channel sampling.}\label{framework_single}
\end{figure}

\section{SAMPLING FRAMEWORK FOR GRAPH SIGNALS}

In this section, we briefly review preliminary works on single-channel graph signal sampling \cite{hara_sampling_2022}. First, we introduce a generalized sampling framework on graphs \cite{tanaka_sampling_2020}. Second, we introduce the SSS under subspace priors.

\subsection{SAMPLING UNDER SUBSPACE PRIORS}\label{subsec:samp_subspace}

Suppose that graph signals are characterized by the following linear model:
\begin{align}
\bm{x}\coloneqq \mb{A}\bm{d},\label{eq:generator}
\end{align}
where $\mb{A}\in\mathbb{R}^{N\times K}\ (K\leq N)$ is a known generator matrix and $\bm{d}\in\mathbb{R}^{K}$ are expansion coefficients.
The generator matrix $\mb{A}$ specifies the signal subspace $\mathcal{A}$.

Let $\mb{S}^\mathsf{T}\in\mathbb{R}^{K\times N}$ be a sampling operator.
Regardless of the choice of $\mb{S}^\mathsf{T}$ (and $\mb{A}$), the best possible recovery is always given by \cite{eldar_sampling_2015,tanaka_generalized_2020,hara_graph_2022}
\begin{align}
\tilde{\bm{x}}=\mb{A}\mb{H}\bm{c}=\mb{A}(\mb{S}^\mathsf{T}\mb{A})^\dagger\mb{S}^\mathsf{T}\bm{x},\label{eq:recov_subspace}
\end{align}
where $\dagger$ is the Moore-Penrose inverse.
The framework is illustrated in Fig. \ref{framework_single}.
The sampling matrix $\mb{S}^\mathsf{T}$ specifies the sampling subspace $\mathcal{S}$.
The correction matrix is therefore represented as $\mb{H} = (\mb{S}^\mathsf{T}\mb{A})^\dagger$.
If $\mathcal{A}$ and $\mathcal{S}$ together span $\mathbb{R}^N$ and only intersect at the origin, perfect recovery, i.e.,  $\tilde{\bm{x}} = \bm{x}$, is obtained with $(\mb{S}^\mathsf{T}\mb{A})^{-1}$. We refer to this condition as the \textit{direct sum} (DS) condition \cite{tanaka_sampling_2020}.

While this paper focuses on the case that the signal subspace is known, we can recover signals without the exact knowledge of $\mathcal{A}$ with appropriate priors such as smoothness and stochastic priors \cite{tanaka_generalized_2020,hara_graph_2022,eldar_sampling_2015,chepuri_graph_2018}.

\subsection{SAMPLING SET SELECTION FOR FULL-BAND GRAPH SIGNALS}\label{sec:sss_single}

We introduce vertex domain sampling in the single-channel setting. Since there is no regular sampling in the graph setting, it requires the selection of a subset of vertices, which is hereafter referred to as sampling set selection (SSS).
Vertex domain sampling operator is defined as follows:

\begin{definition}[Vertex domain sampling]
Let $\mb{I}_{\mathcal{MV}}\in\{0,1\}^{K\times N}$ be the submatrix of the identity matrix indexed by $\mathcal{M}\subset \mathcal{V}$ $(|\mathcal{M}|=K)$ and $\mathcal{V}$. The sampling operator is defined by
\begin{align}
\mb{S}^\mathsf{T}\coloneqq \mb{I}_{\mathcal{MV}}\mb{G},\label{eq:node_samp_ope_part3}
\end{align}
where $\mb{G}\in\mathbb{R}^{N\times N}$ is an arbitrary graph filter. A sampled graph signal is thus given by $\bm{y}=\mb{S}^\mathsf{T}\bm{x}$.
\end{definition}

\noindent

There exist several approaches of SSS, i.e., the design of $\mb{I}_{\mathcal{MV}}$, however, most methods assume the bandlimited graph signals.
In this paper, we consider sampling of full-band graph signals since the analysis of graph signals beyond the bandlimited assumption is generally necessary in a multi-channel setting.
To this aim, we introduce the quality of sampling based on the DS condition:
We consider the following problem as a sampling set selection:
\begin{align}
\mathcal{M}^*=\mathop{\arg \max}_{\mathcal{M}\subset \mathcal{V}} \det (\mb{Z}_\mathcal{M}),\label{eq:detmax1}
\end{align}
where $\mb{Z}=\mb{G}\mb{A}\mb{A}^\mathsf{T}\mb{G}^\mathsf{T}$. Since $\det(\mb{Z}_\mathcal{M})=\det(\mb{S}^\mathsf{T}\mb{A}\mb{A}^\mathsf{T}\mb{S})=|\det(\mb{S}^\mathsf{T}\mb{A})|^2$, \eqref{eq:detmax1} encourages that the DS condition is satisfied.
The cost function in \eqref{eq:detmax1} is designed based on the D-optimal design, ensuring that the direct sum condition is satisfied. Unlike other SSS methods, (4) is applicable to a wide range of signal models beyond the bandlimited assumption. Further details can be found in \cite{hara_sampling_2022}.

The direct maximization of \eqref{eq:detmax1} is combinatorial and is practically intractable.
Therefore, we apply a greedy method to \eqref{eq:detmax1}.
Suppose that $\text{rank}(\mb{Z})\geq K$. 
By applying the Schur determinant formula \cite{zhang_schur_2006} to \eqref{eq:detmax1}, it results in
\begin{align}
y^*=&\mathop{\arg \max}_{y\in \mathcal{M}^c} \det (\mb{Z}_{\mathcal{M}\cup\{y\}})\nonumber\\
= &\mathop{\arg \max}_{y\in \mathcal{M}^c}\det (\mb{Z}_{\mathcal{M}})\cdot( \mb{Z}_{y,y}-\mb{Z}_{y,\mathcal{M}}(\mb{Z}_{\mathcal{M}})^{-1}\mb{Z}_{\mathcal{M},y})\nonumber\\
= &\mathop{\arg \max}_{y\in \mathcal{M}^c}\mb{Z}_{y,y}-\mb{Z}_{y,\mathcal{M}}(\mb{Z}_{\mathcal{M}})^{-1}\mb{Z}_{\mathcal{M},y},\label{eq:maxdet2}
\end{align}
where we omit the multiplication with $\det (\mb{Z}_{\mathcal{M}})$ in the third equivalence because it does not depend on $y^*$. 

Still, \eqref{eq:maxdet2} is computationally expensive due to the matrix inversion, which typically requires $O(|\mathcal{M}|^3)$ computational complexity.
To alleviate this, we utilize the Neumann series for $(\mb{Z}_\mathcal{M})^{-1}$. We omit the detail due to the limitation of the space. Please refer to \cite{hara_sampling_2022}. As a result, the SSS algorithm is described by Algorithm \ref{algo:prop_single}.

\begin{algorithm}[!h]\label{algo:prop_single}
\DontPrintSemicolon
\setlength{\abovedisplayskip}{0pt}
\setlength{\belowdisplayskip}{0pt}
  \KwInput{$\mb{Z},\mathcal{M}=\emptyset, K, m=0$}
  Set $\alpha_0^*$ such that $\|\mb{I}-\alpha_0^*\mb{Z}_\mathcal{M}\|_2\leq 1$\\
  \While{$|\mathcal{M}|<K$}{
  Compute $\bm{\varepsilon}_y^{0}=\alpha^*_0\mb{Z}_\mathcal{M}\bm{\delta}_y$\\
    \While{$\|\bm{\varepsilon}_y^{0}-\alpha_m^*\mb{Z}_\mathcal{M}\bm{\varepsilon}_y^{m}\|\geq \beta$ for some $\beta>0$}{
      $\alpha_{m+1}^*\leftarrow \frac{(\bm{\varepsilon}^m_y)^\mathsf{T}\mb{Z}_\mathcal{M}\bm{\varepsilon}^m_y}{\|\mb{Z}_\mathcal{M}\bm{\varepsilon}^m_y\|^2}$\\
      $\bm{\varepsilon}^{m+1}_y \leftarrow \bm{\varepsilon}_y^{0}+(\mb{I}-\alpha^*_{m+1}\mb{Z}_\mathcal{M})\bm{\varepsilon}_y^{m}$\\
      $m\leftarrow m+1$\\
     }
    $y^*\rightarrow\mathop{\arg\max}_{y\in\mathcal{M}^c} \mb{Z}_{yy}-\mb{Z}_{y,\mathcal{M}}\bm{\varepsilon}_y^{m}$\\
    $\mathcal{M}\leftarrow \mathcal{M}\cup \{y\}$
  }
  \KwOutput{$\mathcal{M}$}
\caption{Single-channel SSS}
\end{algorithm}

In the following, we extend the single-channel sampling to the multi-channel sampling. The MCS parallels most of the formulation of the single-channel one.

\section{PROPOSED MULTI-CHANNEL SAMPLING ON GRAPHS}

In this section, we build a MCS framework by extending the single channel sampling introduced in the previous section.  The framework is illustrated in Fig. \ref{framework_multi}. In addition, we convert the MCS into an equivalent subband-wise expression. This allows for an efficient computation of the recovery transform. Based on the framework, we develop the SSS for our MCS such that full-band graph signals are best recovered.

\begin{figure}[t!]
\centering
  \includegraphics[width=1.\columnwidth]{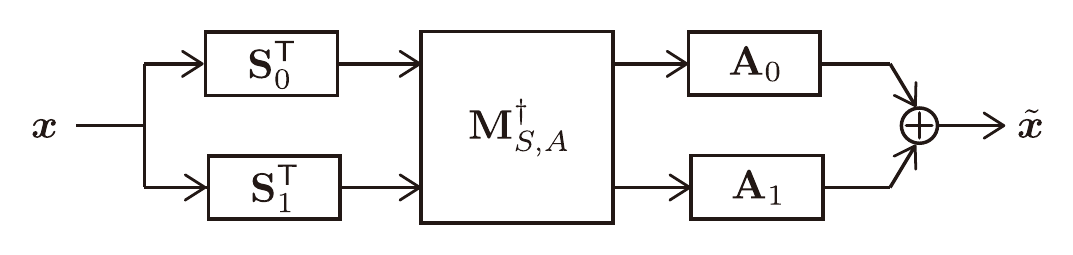}\vspace{-1em}
  \caption{The framework of MCS in the case of $J=2$.}\label{framework_multi}
\end{figure}

\subsection{FRAMEWORK OF MULTI-CHANNEL SAMPLING}

We now assume that graph signals are generated with $J$ generators, i.e.,
\begin{align}
\bm{x}=\sum_{\ell=0}^{J-1}\mb{A}_\ell \bm{d}_\ell,\label{eq:multi_model}
\end{align}
where $\mb{A}_\ell\in\mathbb{R}^{N\times K_\ell}$ and $\bm{d}_\ell\in\mathbb{R}^{K_\ell}$ are the $\ell$th generator and expansion coefficients, respectively. 

We often encounter the graph signal model in \eqref{eq:multi_model} for many applications. For example, in multiscale analysis of graph signals \cite{shuman_multiscale_2016}, piecewise-smooth graph signals are often considered. These signals are composed of a combination of piecewise constant and globally smooth signal models. This situation includes sensor network data \cite{chen_monitoring_2016}.

Suppose that the number of the sampling set in the $\ell$th channel is $K_\ell$.
The sampling operator $\mb{S}^\mathsf{T}\in\mathbb{R}^{(K_0+\cdots +K_{J-1})\times N}$ is given by
\begin{align}
\mb{S}^\mathsf{T}=
\begin{bmatrix}
 \mb{S}_0^\mathsf{T} \\
 \vdots\\
 \mb{S}_{J-1}^\mathsf{T}
\end{bmatrix}=
\begin{bmatrix}
\mb{I}_{\mathcal{M}_0\mathcal{V}} \mb{G}_0 \\
\vdots\\
\mb{I}_{\mathcal{M}_{J-1}\mathcal{V}} \mb{G}_1
\end{bmatrix},
\end{align}
where $\mathcal{M}_\ell$ is the $\ell$th sampling set and $\mb{G}_\ell\in\mathbb{R}^{N\times N}$ is an arbitrary graph filter for the $\ell$th subband.

According to \eqref{eq:recov_subspace}, the recovered graph signals are given by \cite[pp. 226--235]{eldar_sampling_2015}
\begin{align}
\tilde{\bm{x}}=
\begin{bmatrix}
\mb{A}_0 & \cdots &\mb{A}_{J-1}
\end{bmatrix}
\mb{M}_{S,A}^\dagger
\begin{bmatrix}
\mb{S}_0^\mathsf{T} \\ \vdots \\ \mb{S}_{J-1}^\mathsf{T}
\end{bmatrix}\bm{x},\label{eq:multi_samp}
\end{align}
where 
\begin{align}
\mb{M}_{S,A}=
\begin{bmatrix}
\mb{S}_0^\mathsf{T}\mb{A}_0 & \cdots & \mb{S}_0^\mathsf{T}\mb{A}_{J-1}\\
\vdots & \ddots & \vdots\\
\mb{S}_{J-1}^\mathsf{T}\mb{A}_0 & \cdots & \mb{S}_{J-1}^\mathsf{T}\mb{A}_{J-1}
\end{bmatrix}.\label{eq:corr_multi}
\end{align}
This implies recovery performance essentially depends on the invertibility of $\mb{M}_{S,A}$. 

Hereafter, we focus on the two-channel sampling, i.e., $J=2$, for simplicity.
To extend the following MCS for $J>2$, we may cascade the two-channel MCS \cite{sakiyama_two-channel_2019}, and perform the proposed sampling scheme recursively.

\subsection{SUBBAND-WISE REPRESENTATION OF MCS}

Since the MCS requires the matrix inversion of \eqref{eq:corr_multi}, it could be computational consuming especially for large graphs.
To reduce its computational cost, we rewrite \eqref{eq:corr_multi} as a computationally-efficient form.

For simplicity, this paper focuses on the critically-sampled case, i.e., $\mathcal{M}_1=\mathcal{M}_0^c$ (please refer to the notation in Section II-\ref{sec:sss_single}), but any sampling ratio can be applied to our MCS including over- and under-sampled cases.
Suppose that $\mb{S}_0^\mathsf{T}\mb{A}_0$ and $\mb{S}_1^\mathsf{T}\mb{A}_1$ are invertible. Under this condition, $\mb{M}_{S,A}^\dagger\mb{S}^\mathsf{T}$ can be rewritten as \cite{zhang_schur_2006}

\begin{align}
&\mb{M}_{S,A}^\dagger\mb{S}^\mathsf{T}\nonumber\\
&=
\begin{bmatrix}
\mb{S}_A^\mathsf{T}\mb{A}_0 & \mb{0}\\
\mb{0} & \mb{S}_B^\mathsf{T}\mb{A}_1
\end{bmatrix}^\dagger\nonumber\\
&\hspace{2.5ex}\times 
\begin{bmatrix}
\mb{I} & -\mb{S}_0^\mathsf{T}\mb{A}_1(\mb{S}_1^\mathsf{T}\mb{A}_1)^{-1}\\
-\mb{S}_1^\mathsf{T}\mb{A}_0(\mb{S}_0^\mathsf{T}\mb{A}_0)^{-1} & \mb{I}
\end{bmatrix}
\begin{bmatrix}
\mb{S}_0^\mathsf{T} \\ \mb{S}_1^\mathsf{T}
\end{bmatrix}\nonumber\\
&=\begin{bmatrix}
\mb{S}_A^\mathsf{T}\mb{A}_0 & \mb{0}\\
\mb{0} & \mb{S}_B^\mathsf{T}\mb{A}_1
\end{bmatrix}^\dagger
\begin{bmatrix}
\mb{S}_A^\mathsf{T} \\ \mb{S}_B^\mathsf{T}
\end{bmatrix}
,\label{eq:corr_inv_woodbury_2}
\end{align}
where
\begin{align}
\begin{split}
\mb{S}_A^\mathsf{T}\coloneqq&
\mb{S}_0^\mathsf{T}-\mb{S}_0^\mathsf{T}\mb{A}_1(\mb{S}_1^\mathsf{T}\mb{A}_1)^{-1}\mb{S}_1^\mathsf{T}\\
\mb{S}_B^\mathsf{T}\coloneqq&\mb{S}_1^\mathsf{T}-\mb{S}_1^\mathsf{T}\mb{A}_0(\mb{S}_0^\mathsf{T}\mb{A}_0)^{-1}\mb{S}_0^\mathsf{T}.
\end{split}\label{eq:samp_multi_redefine}
\end{align}
By viewing $\mb{S}_A^\mathsf{T}$ and $\mb{S}_B^\mathsf{T}$ as new sampling operators, \eqref{eq:multi_samp} can be expressed by
\begin{align}
\tilde{\bm{x}}=
\begin{bmatrix}
\mb{A}_0 & \mb{A}_1
\end{bmatrix}{\widetilde{\mb{M}}}_{S,A}^\dagger
\begin{bmatrix}
\mb{S}_A^\mathsf{T} \\ \mb{S}_B^\mathsf{T}
\end{bmatrix}\bm{x},\label{eq:multi_samp_sws}
\end{align}
where
\begin{align}
\widetilde{\mb{M}}_{S,A}=
\begin{bmatrix}
\mb{S}_A^\mathsf{T}\mb{A}_0 & \bm{0}\\
\bm{0} & \mb{S}_B^\mathsf{T}\mb{A}_1
\end{bmatrix}.\label{eq:corr_multi_sws}
\end{align}
\noindent In comparison with \eqref{eq:multi_samp}, we notice that \eqref{eq:multi_samp_sws} can be viewed as the subband-wise MCS. The modified framework is illustrated in Fig. \ref{framework_multi_subband}.
Obviously, the inverse of \eqref{eq:corr_multi_sws} requires the lower computational complexity than that of \eqref{eq:corr_multi}.

\begin{figure}[!t]
\centering
  \includegraphics[width=1.\columnwidth]{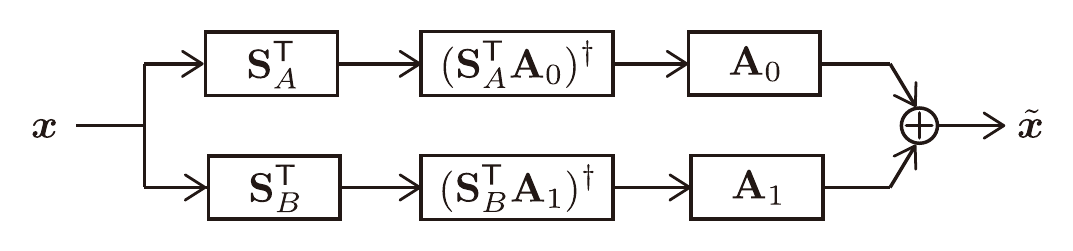}\vspace{-1em}
  \caption{The framework of the subband-wise MCS.}\label{framework_multi_subband}
\end{figure}

In contrast to the single channel setting, we need to simultaneously consider the best $\mathcal{M}_0$ and $\mathcal{M}_1$ in the multi-channel case.
In the following, we extend the SSS to that for MCS.

\subsection{SSS FOR MCS}

Based on \eqref{eq:multi_samp_sws}, we design the SSS for the graph MCS.
Recall that we assume that $\mb{S}^\mathsf{T}_0\mb{A}_0$ and $\mb{S}^\mathsf{T}_1\mb{A}_1$ in \eqref{eq:samp_multi_redefine} are invertible.
To satisfy this condition, we maximize the product of  $\det(\mb{S}^\mathsf{T}_0\mb{A}_0\mb{A}_0^\mathsf{T}\mb{S}_0)$ and $\det(\mb{S}^\mathsf{T}_1\mb{A}_1\mb{A}_1^\mathsf{T}\mb{S}_1)$ (see \eqref{eq:detmax1}).
We consider the following problem:
\begin{align}
\mathcal{M}^*=&\argmax_{\mathcal{M}\subset \mathcal{V}} \det([\mb{Z}_{0}]_\mathcal{M})\det([\mb{Z}_{1}]_{\mathcal{M}^c}),\label{eq:detmax1_multi}
\end{align}
where $\mb{Z}_0=\mb{G}_0\mb{A}_0\mb{A}_0^\mathsf{T}\mb{G}_0^\mathsf{T}$ and $\mb{Z}_1=\mb{G}_1\mb{A}_1\mb{A}_1^\mathsf{T}\mb{G}_1^\mathsf{T}$. By maximizing $\det([\mb{Z}_0]_\mathcal{M})\det([\mb{Z}_1]_{\mathcal{M}^c})$, the inverses of $\mb{S}^\mathsf{T}_0\mb{A}_0$ and $\mb{S}^\mathsf{T}_1\mb{A}_1$ become stable.
The cost function in \eqref{eq:detmax1_multi} is designed such that the graph signal is maximally separated into two subbands based on the expression in \eqref{eq:multi_samp_sws}. While the PR condition is not structurally guaranteed, it is known that PR can be practically realized in many cases \cite{chen_discrete_2015}.

By applying the Schur determinant formula \cite{zhang_schur_2006} to \eqref{eq:detmax1_multi}, the greedy SSS algorithm of the proposed graph MCS selects a node $y^*$ one-by-one that maximizes the following equation:
\begin{align}
y^*=&\argmax_{y\in\mathcal{M}^c}\det([\mb{Z}_0]_{\mathcal{M}\cup\{y\}})\det([\mb{Z}_1]_{\mathcal{M}\backslash\{y\}})\nonumber\\
=&\argmax_{y\in\mathcal{M}^c}\det([\mb{Z}_0]_{\mathcal{M}})\det([\mb{Z}_1]_{\mathcal{M}})\nonumber\\
&\cdot\left(\frac{[\mb{Z}_0]_{y,y}-[\mb{Z}_{0}]_{y,\mathcal{M}}([\mb{Z}_0]_{\mathcal{M},\mathcal{M}})^{-1}[\mb{Z}_{0}]_{\mathcal{M},y}}{[\mb{Z}_1]_{y,y}-[\mb{Z}_{1}]_{y,\overline{\mathcal{M}}}([\mb{Z}_1]_{\overline{\mathcal{M}},\overline{\mathcal{M}}})^{-1}[\mb{Z}_{1}]_{\overline{\mathcal{M}},y}}\right)\nonumber\\
=&\argmax_{y\in\mathcal{M}^c}\frac{[\mb{Z}_0]_{y,y}-[\mb{Z}_{0}]_{y,\mathcal{M}}([\mb{Z}_0]_{\mathcal{M},\mathcal{M}})^{-1}[\mb{Z}_{0}]_{\mathcal{M},y}}{[\mb{Z}_1]_{y,y}-[\mb{Z}_{1}]_{y,\overline{\mathcal{M}}}([\mb{Z}_1]_{\overline{\mathcal{M}},\overline{\mathcal{M}}})^{-1}[\mb{Z}_{1}]_{\overline{\mathcal{M}},y}},\label{eq:detmax2_multi}
\end{align}
where $\overline{\mathcal{M}}=\mathcal{V}\backslash(\mathcal{M}\cup \{y\})$. 
We omit the multiplication with $\det([\mb{Z}_0]_{\mathcal{M}})\det([\mb{Z}_1]_{\mathcal{M}})$ in the third equivalence because it does not depend on $y^*$.

We apply the Neumann series approximation to the invereses of $[\mb{Z}_0]_{\mathcal{M},\mathcal{M}}$ and $[\mb{Z}_1]_{\overline{\mathcal{M}},\overline{\mathcal{M}}}$ in \eqref{eq:detmax2_multi} (see Algorithm \ref{algo:prop_single}).
As a result, the proposed algorithm is described by Algorithm \ref{algo:prop_multi}.

\begin{algorithm}[!h]\label{algo:prop_multi}
\DontPrintSemicolon
\setlength{\abovedisplayskip}{0pt}
\setlength{\belowdisplayskip}{0pt}
  \KwInput{$\mb{Z}_0,\mb{Z}_1,\mathcal{M}=\emptyset, K, m=0$}
  Set $\alpha_0^*$ such that $\|\mb{I}-\alpha_0^*\mb{Z}_\mathcal{M}\|_2\leq 1$\\
  Set $\beta_0^*$ such that $\|\mb{I}-\beta_0^*\mb{Z}_{\overline{\mathcal{M}}}\|_2\leq 1$\\
  \While{$|\mathcal{M}|<K$}{
  Compute $\bm{\varepsilon}_y^{0}=\alpha^*_0[\mb{Z}_0]_\mathcal{M}\bm{\delta}_y$\\
  Compute $\bm{\vartheta}_y^{0}=\beta^*_0[\mb{Z}_1]_{\overline{\mathcal{M}}}\,\bm{\delta}_y$\\
    \While{$\|\bm{\varepsilon}_y^{0}-\alpha_m^*[\mb{Z}_0]_\mathcal{M}\bm{\varepsilon}_y^{m}\|\geq \gamma \land \|\bm{\vartheta}_y^{0}-\beta_m^*[\mb{Z}_1]_{\overline{\mathcal{M}}}\,\bm{\vartheta}_y^{m}\|\geq \gamma$ for some $\gamma>0$}{
      $\alpha_{m+1}^*\leftarrow \frac{(\bm{\varepsilon}^m_y)^\mathsf{T}[\mb{Z}_0]_\mathcal{M}\bm{\varepsilon}^m_y}{\|[\mb{Z}_0]_\mathcal{M}\bm{\varepsilon}^m_y\|^2}$\\
      $\beta_{m+1}^*\leftarrow \frac{(\bm{\vartheta}^m_y)^\mathsf{T}[\mb{Z}_1]_{\overline{\mathcal{M}}}\bm{\vartheta}^m_y}{\|[\mb{Z}_1]_{\overline{\mathcal{M}}}\bm{\vartheta}^m_y\|^2}$\\
      $\bm{\varepsilon}^{m+1}_y \leftarrow \bm{\varepsilon}_y^{0}+(\mb{I}-\alpha^*_{m+1}[\mb{Z}_0]_\mathcal{M})\bm{\varepsilon}_y^{m}$\\
      $\bm{\vartheta}^{m+1}_y \leftarrow \bm{\vartheta}_y^{0}+(\mb{I}-\beta^*_{m+1}[\mb{Z}_1]_{\overline{\mathcal{M}}})\,\bm{\vartheta}_y^{m}$\\
      $m\leftarrow m+1$\\
     }
    $y^*\rightarrow\mathop{\arg\max}_{y\in\mathcal{M}^c} \frac{[\mb{Z}_0]_{yy}-[\mb{Z}_0]_{y,\mathcal{M}}\bm{\varepsilon}_y^{m}}{[\mb{Z}_1]_{yy}-[\mb{Z}_1]_{y,\overline{\mathcal{M}}}\bm{\vartheta}_y^{m}}$\\
    $\mathcal{M}\leftarrow \mathcal{M}\cup \{y\}$
  }
  \KwOutput{$\mathcal{M}$}
\caption{Two-channel SSS}
\end{algorithm}

In the following, we clarify relationship between the above MCS and existing BFBs.

\section{RELATIONSHIP BETWEEN MCS AND BGFB}

In this section, we reveal that existing BGFBs are a special case of our MCS in \eqref{eq:multi_samp}. We depict the framework of PR BGFBs in Fig. \ref{framework_multi_gfb}.

Let $\mathcal{G}_\text{bpt}=(\mathcal{V}_L,\mathcal{V}_H,\mathcal{E})$ be a bipartite graph, where $\mathcal{V}_L$ and $\mathcal{V}_H$ are two disjoint sets of vertices such that every edge is connected between a vertex in $\mathcal{V}_L$ and that in $\mathcal{V}_H$.
In other words, no edges exist within $\mathcal{V}_L$ and $\mathcal{V}_H$.
We denote the normalized Laplacian matrix for $\mathcal{G}_\text{bpt}$ by $\bm{\mathcal{L}}_{\text{bpt}}$. Its eigendecomposition is given by $\bm{\mathcal{L}}_\text{bpt}=\mb{V}\mb{\Lambda}_\text{bpt}\mb{V}^\mathsf{T}$, where
\begin{align}
\mb{V}\coloneqq
\begin{bmatrix}
\mb{U}_{LL} & \mb{U}_{LL}\\
\mb{U}_{HL} & -\mb{U}_{HL}
\end{bmatrix}.
\end{align}
Let $\mb{S}^\mathsf{T}_{\text{ana},\ell}$ and $\mb{S}_{\text{syn},\ell}$ be the $\ell$th analysis and synthesis transforms, respectively. Following Theorem 2 in \cite{sakiyama_two-channel_2019}, analysis and synthesis transforms in a BGFB can be expressed by 
\begin{align}
\begin{split}
\mb{S}_{\text{ana},\ell}^\mathsf{T}\coloneqq&\mb{I}_{\mathcal{MV}}\mb{H}_\ell=
\mb{U}_{LL}
\begin{bmatrix}
\mb{I}_{N/2} & \mb{J}_{N/2}
\end{bmatrix}
\widehat{\mb{H}}_\ell
\mb{V}^\mathsf{T}\\
\mb{S}_{\text{syn},\ell}\coloneqq&\mb{G}_\ell\mb{I}_{\mathcal{M}^c\mathcal{V}}^\mathsf{T}=
\mb{V}
\widehat{\mb{G}}_\ell
\begin{bmatrix}
\mb{I}_{N/2} \\ -\mb{J}_{N/2}
\end{bmatrix}
\mb{U}_{LL}^\mathsf{T}
\end{split}\, \text{for }\ell=1, 2,\label{eq:bpt_samp}
\end{align}
where $\widehat{\mb{H}}_\ell=H_\ell(\mb{\Lambda}_\text{bpt})$, $\widehat{\mb{G}}_\ell=G_\ell(\mb{\Lambda}_\text{bpt})$, and $\mb{J}$ is the counter-identity matrix.
Note that the PR condition in GFBs can be expressed by \cite{narang_compact_2013,narang_perfect_2012}
\begin{align}
\mb{S}_{\text{syn},0}\mb{S}_{\text{ana},0}^\mathsf{T}+\mb{S}_{\text{syn},1}\mb{S}_{\text{ana},1}^\mathsf{T}=\mb{I}.\label{eq:pr_cond_trad}
\end{align}
In graph filter bank designs, $\widehat{\mb{H}}_\ell$ and $\widehat{\mb{G}}_\ell$ in \eqref{eq:bpt_samp} are designed so that \eqref{eq:pr_cond_trad} is satisfied. 

From a perspective of generalized sampling \cite{eldar_sampling_2015} (cf. Fig.~\ref{framework_multi_subband}), we can view that  $\mb{S}_{\text{ana},0}^\mathsf{T}=\mb{S}_A^\mathsf{T}$, $\mb{S}_{\text{ana},1}^\mathsf{T}=\mb{S}_B^\mathsf{T}$, $\mb{S}_{\text{syn},0}=\mb{A}_0(\mb{S}_A^\mathsf{T}\mb{A}_0)^{-1}$, and $\mb{S}_{\text{syn},1}=\mb{A}_1(\mb{S}_B^\mathsf{T}\mb{A}_1)^{-1}$.
Consequently, the PR condition of GFBs in \eqref{eq:pr_cond_trad} is rewritten as a MCS as follows:
\begin{align}
\widetilde{\mb{A}}_0\mb{S}_A^\mathsf{T}+\widetilde{\mb{A}}_1\mb{S}_B^\mathsf{T}=\mb{I},\label{eq:pr_cond_mcs}
\end{align}
where $\widetilde{\mb{A}}_0=\mb{A}_0(\mb{S}^\mathsf{T}_A\mb{A}_0)^{-1}$ and $\widetilde{\mb{A}}_1=\mb{A}_1(\mb{S}^\mathsf{T}_B\mb{A}_1)^{-1}$.
While the GFB form \eqref{eq:pr_cond_trad} and the MCS form \eqref{eq:pr_cond_mcs} are not identical in general, they coincide each other with bipartite graphs.
This can be stated in the following theorem.

\begin{theorem}\label{theorem:multi_pr_cond}
Let $\mb{S}_0^\mathsf{T}\in\mathbb{R}^{K\times N}$ and $\mb{S}^\mathsf{T}_1\in \mathbb{R}^{(N-K)\times}N$ be two sampling operators, and let $\mb{A}_0\in\mathbb{R}^{N\times K}$ and $\mb{A}_1\in\mathbb{R}^{K\times(N-K)}$ be generation operators such that $\mb{S}^\mathsf{T}_0\mb{A}_0$ and $\mb{S}^\mathsf{T}_1\mb{A}_1$ are invertible. If sampling and reconstruction are performed on the bipartite graph defined by \eqref{eq:bpt_samp}, it follows that
\begin{align}
\widetilde{\mb{A}}_0\mb{S}_A^\mathsf{T}+\widetilde{\mb{A}}_1\mb{S}_B^\mathsf{T}=\mb{S}_{\text{\emph{syn}},0}\mb{S}_{\text{\emph{ana}},0}^\mathsf{T}+\mb{S}_{\text{\emph{syn}},1}\mb{S}_{\text{\emph{ana}},1}^\mathsf{T}=\mb{I}.\label{eq:pr_cond_bpt}
\end{align}
That is, the PR condition for the graph MCS is reduced to that of the conventional BGFBs as long as the graph is bipartite and the normalized Laplacian is used as the graph operator.
\end{theorem}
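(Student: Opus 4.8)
The plan is to establish the two equalities in \eqref{eq:pr_cond_bpt} in turn. The left equality says that, on a bipartite graph, the subband-wise MCS recovery operator of \eqref{eq:multi_samp_sws} coincides term by term with the synthesis$\,\circ\,$analysis operator of the BGFB; the right equality is the classical perfect-reconstruction identity for two-channel BGFBs. Only the left equality actually needs the bipartite hypothesis and the choice of the normalized Laplacian; the right one is then inherited from the filter-design constraints behind \eqref{eq:pr_cond_trad}.

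\emph{Left equality.} First I would pin down the correspondence between the two frameworks. Take the critically-sampled configuration with $\mathcal{M}_0=\mathcal{V}_L$, $\mathcal{M}_1=\mathcal{M}_0^c=\mathcal{V}_H$ and $K=N/2$; identify the MCS sampling filters $\mb{G}_0,\mb{G}_1$ with the BGFB analysis filters of \eqref{eq:bpt_samp}, so that $\mb{S}_0^\mathsf{T}=\mb{S}_{\mathrm{ana},0}^\mathsf{T}$ and $\mb{S}_1^\mathsf{T}=\mb{S}_{\mathrm{ana},1}^\mathsf{T}$ prior to the Schur reduction; and choose $\mb{A}_0,\mb{A}_1$ so that $\widetilde{\mb{A}}_0=\mb{A}_0(\mb{S}_A^\mathsf{T}\mb{A}_0)^{-1}$ and $\widetilde{\mb{A}}_1=\mb{A}_1(\mb{S}_B^\mathsf{T}\mb{A}_1)^{-1}$ realize the BGFB synthesis transforms $\mb{S}_{\mathrm{syn},0},\mb{S}_{\mathrm{syn},1}$ of \eqref{eq:bpt_samp} (here the standing invertibility of $\mb{S}_0^\mathsf{T}\mb{A}_0$ and $\mb{S}_1^\mathsf{T}\mb{A}_1$ is what makes such a choice legitimate and the reduction \eqref{eq:corr_inv_woodbury_2} valid). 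The remaining work is to verify that the Schur-complement operators $\mb{S}_A^\mathsf{T},\mb{S}_B^\mathsf{T}$ of \eqref{eq:samp_multi_redefine} really collapse to the bipartite analysis forms in \eqref{eq:bpt_samp}; this simplification uses the block structure of $\mb{V}$, equivalently the spectral-folding property of the normalized Laplacian on a bipartite graph. Once the pieces are matched, $\mb{S}_{\mathrm{syn},0}\mb{S}_{\mathrm{ana},0}^\mathsf{T}=\widetilde{\mb{A}}_0\mb{S}_A^\mathsf{T}$ and $\mb{S}_{\mathrm{syn},1}\mb{S}_{\mathrm{ana},1}^\mathsf{T}=\widetilde{\mb{A}}_1\mb{S}_B^\mathsf{T}$ term by term, which is the left equality.

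\emph{Right equality.} Substitute the bipartite expressions \eqref{eq:bpt_samp} and compute, for each $\ell$,
\[
\mb{S}_{\mathrm{syn},\ell}\mb{S}_{\mathrm{ana},\ell}^\mathsf{T}
=\mb{V}\,\widehat{\mb{G}}_\ell\begin{bmatrix}\mb{I}\\-\mb{J}\end{bmatrix}\bigl(\mb{U}_{LL}^\mathsf{T}\mb{U}_{LL}\bigr)\begin{bmatrix}\mb{I}&\mb{J}\end{bmatrix}\widehat{\mb{H}}_\ell\,\mb{V}^\mathsf{T}.
\]
Orthogonality of $\mb{V}$ (i.e.\ $\mb{V}^\mathsf{T}\mb{V}=\mb{I}_N$) forces $\mb{U}_{LL}^\mathsf{T}\mb{U}_{LL}=\mb{U}_{HL}^\mathsf{T}\mb{U}_{HL}=\tfrac12\mb{I}_{N/2}$, and $\mb{J}_{N/2}^2=\mb{I}_{N/2}$, so the product reduces to $\tfrac12\mb{V}\widehat{\mb{G}}_\ell\left[\begin{smallmatrix}\mb{I}&\mb{J}\\-\mb{J}&-\mb{I}\end{smallmatrix}\right]\widehat{\mb{H}}_\ell\mb{V}^\mathsf{T}$. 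Summing over $\ell\in\{0,1\}$ and expanding in the spectral domain, the diagonal blocks collect the reconstruction combination $\sum_\ell G_\ell(\lambda)H_\ell(\lambda)$ and the off-diagonal blocks collect the aliasing combination $\sum_\ell G_\ell(\lambda)H_\ell(2-\lambda)$; the filter-design constraints defining the BGFB PR condition \eqref{eq:pr_cond_trad} make the first constant and the second zero, so the sum telescopes to $\mb{V}\mb{I}_N\mb{V}^\mathsf{T}=\mb{I}$. Equivalently, this step is precisely Theorem~2 of \cite{sakiyama_two-channel_2019} (see also \cite{narang_perfect_2012}) and may simply be cited.

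The main obstacle is the spectral cancellation in the right equality: checking that the off-diagonal ``aliasing'' blocks vanish after summing the two channels requires carefully tracking how the counter-identity $\mb{J}_{N/2}$ interacts with the spectral-folding block decomposition of $\widehat{\mb{H}}_\ell$ and $\widehat{\mb{G}}_\ell$, and this is exactly where bipartiteness of $\mathcal{G}$ together with the normalized Laplacian is indispensable (they produce the eigenvalue symmetry $\lambda\leftrightarrow 2-\lambda$ and the block form of $\mb{V}$). A secondary, bookkeeping-type subtlety is the Step-1 verification that \eqref{eq:corr_inv_woodbury_2} genuinely collapses to \eqref{eq:bpt_samp}, which is where the theorem's invertibility hypotheses on $\mb{S}_0^\mathsf{T}\mb{A}_0$ and $\mb{S}_1^\mathsf{T}\mb{A}_1$ are consumed.
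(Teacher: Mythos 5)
Your handling of the right equality is consistent with the paper: the ``$=\mb{I}$'' part is just the BGFB design constraint \eqref{eq:pr_cond_trad} (Theorem 2 of the cited two-channel work), and both you and the paper take it as given. The problem is the left equality, which is the actual content of the theorem, and there your argument stops exactly where the proof has to start. Expanding \eqref{eq:samp_multi_redefine}, the quantity $\widetilde{\mb{A}}_0\mb{S}_A^\mathsf{T}+\widetilde{\mb{A}}_1\mb{S}_B^\mathsf{T}$ differs from $\mb{S}_{\mathrm{syn},0}\mb{S}_{\mathrm{ana},0}^\mathsf{T}+\mb{S}_{\mathrm{syn},1}\mb{S}_{\mathrm{ana},1}^\mathsf{T}$ by the two Schur-complement cross terms $\widetilde{\mb{A}}_0\mb{S}_0^\mathsf{T}\mb{A}_1(\mb{S}_1^\mathsf{T}\mb{A}_1)^{-1}\mb{S}_1^\mathsf{T}$ and $\widetilde{\mb{A}}_1\mb{S}_1^\mathsf{T}\mb{A}_0(\mb{S}_0^\mathsf{T}\mb{A}_0)^{-1}\mb{S}_0^\mathsf{T}$, and the theorem is proved precisely by showing that their \emph{sum} vanishes on a bipartite graph. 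The paper does this concretely: it conjugates the cross-sum by $\mb{V}$, substitutes the forms \eqref{eq:bpt_samp}, and tracks the counter-identity $\mb{J}$ (writing $\widehat{\mb{H}}^\prime=\mb{J}\widehat{\mb{H}}=\widehat{\mb{H}}\mb{J}$), whereupon the two aliasing-type products $\widehat{\mb{G}}_0\widehat{\mb{H}}_0^\prime\widehat{\mb{G}}_1\widehat{\mb{H}}_1$ and $\widehat{\mb{G}}_0\widehat{\mb{H}}_0\widehat{\mb{G}}_1\widehat{\mb{H}}_1^\prime$ coincide and cancel. You only assert that $\mb{S}_A^\mathsf{T}$ and $\mb{S}_B^\mathsf{T}$ ``collapse'' to the bipartite analysis operators because of ``the block structure of $\mb{V}$ / spectral folding''; that is a restatement of the conclusion, not a derivation, so the central cancellation is missing from your proof.

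There is also a structural issue with the route you chose: matching the two sides channel by channel, i.e.\ $\widetilde{\mb{A}}_0\mb{S}_A^\mathsf{T}=\mb{S}_{\mathrm{syn},0}\mb{S}_{\mathrm{ana},0}^\mathsf{T}$ and $\widetilde{\mb{A}}_1\mb{S}_B^\mathsf{T}=\mb{S}_{\mathrm{syn},1}\mb{S}_{\mathrm{ana},1}^\mathsf{T}$ separately, requires each cross term to vanish on its own, which (by right-multiplying by $\mb{A}_1$ or $\mb{A}_0$ and using the assumed invertibility) is equivalent to $\mb{S}_0^\mathsf{T}\mb{A}_1=\mb{0}$ and $\mb{S}_1^\mathsf{T}\mb{A}_0=\mb{0}$. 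In the spectral domain this amounts to a folding symmetry of the products $H_0(\lambda)G_1(\lambda)$ and $H_1(\lambda)G_0(\lambda)$ under $\lambda\mapsto 2-\lambda$, which holds for special designs (e.g.\ QMF-type filters) but is not implied by the theorem's hypotheses. What the bipartite structure delivers, and what the paper actually proves, is the weaker statement that the two cross terms cancel each other in the sum; the identities $\mb{S}_A^\mathsf{T}=\mb{S}_0^\mathsf{T}$ and $\mb{S}_B^\mathsf{T}=\mb{S}_1^\mathsf{T}$ are stated in the paper only afterwards, as a consequence of the theorem combined with \eqref{eq:pr_cond_sbs}. To repair your proof, drop the per-channel matching and carry out the spectral-domain computation of the summed cross terms as above.
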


\begin{proof}
By definition in \eqref{eq:samp_multi_redefine}, we need to show $\widetilde{\mb{A}}_0\mb{S}^\mathsf{T}_0\widetilde{\mb{A}}_1\mb{S}^\mathsf{T}_1+\widetilde{\mb{A}}_1\mb{S}^\mathsf{T}_1\widetilde{\mb{A}}_0\mb{S}^\mathsf{T}_0=\mb{0}$. This can be verified by
\begin{align}
&\mb{V}^\mathsf{T}
(\widetilde{\mb{A}}_0\mb{S}_0^\mathsf{T}\widetilde{\mb{A}}_1\mb{S}_1^\mathsf{T}+\widetilde{\mb{A}}_1\mb{S}_1^\mathsf{T}\widetilde{\mb{A}}_0\mb{S}_0^\mathsf{T})
\mb{V}
\nonumber\\
&=2(\widehat{\mb{G}}_0\begin{bmatrix}\mb{I} & \mb{J}\end{bmatrix}\widehat{\mb{H}}_0\widehat{\mb{G}}_1\begin{bmatrix}\mb{I} &  -\mb{J}\end{bmatrix}\widehat{\mb{H}}_1)\nonumber\\
&=2(\widehat{\mb{G}}_0\widehat{\mb{H}}_0\widehat{\mb{G}}_1\widehat{\mb{H}}_1+\widehat{\mb{G}}_0\widehat{\mb{H}}_0^\prime\widehat{\mb{G}}_1\widehat{\mb{H}}_1\nonumber\\
&\hspace{2.5ex}-\widehat{\mb{G}}_0\widehat{\mb{H}}_0\widehat{\mb{G}}_1\widehat{\mb{H}}_1^\prime-\widehat{\mb{G}}_0\widehat{\mb{H}}_0\widehat{\mb{G}}_1\widehat{\mb{H}}_1)\nonumber\\
&=2(\widehat{\mb{G}}_0\widehat{\mb{H}}_0^\prime\widehat{\mb{G}}_1\widehat{\mb{H}}_1-\widehat{\mb{G}}_0\widehat{\mb{H}}_0\widehat{\mb{G}}_1\widehat{\mb{H}}_1^\prime)\nonumber\\
&=2\widehat{\mb{G}}_0\widehat{\mb{G}}_1(\widehat{\mb{H}}_0^\prime\widehat{\mb{H}}_1-\widehat{\mb{H}}_0\widehat{\mb{H}}_1^\prime)\nonumber\\
&=2\widehat{\mb{G}}_0\widehat{\mb{G}}_1(\widehat{\mb{H}}_0\mb{J}\widehat{\mb{H}}_1-\widehat{\mb{H}}_0\mb{J}\widehat{\mb{H}}_1)=\mb{0},
\end{align}
where $\widehat{\mb{H}}^\prime=\mb{J}\widehat{\mb{H}}=\widehat{\mb{H}}\mb{J}$. This completes the proof.
\end{proof}

\begin{figure}[!t]
\centering
  \includegraphics[width=1.\columnwidth]{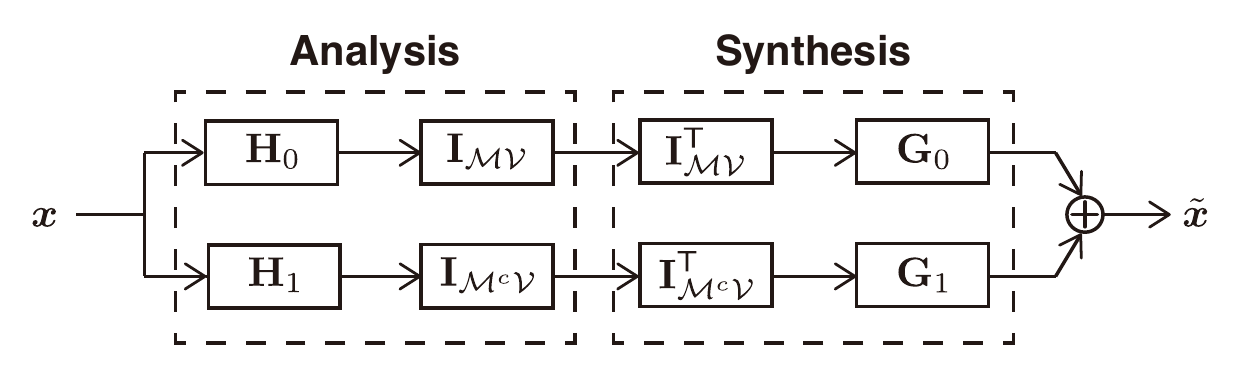}\vspace{-1em}
  \caption{Framework of two-channel PR GFBs.}\label{framework_multi_gfb}
\end{figure}

\noindent Note that \eqref{eq:pr_cond_bpt} can be expressed as
\begin{align}
&\begin{bmatrix}
\mb{A}_0 & \mb{A}_1
\end{bmatrix}
\begin{bmatrix}
\mb{S}_A^\mathsf{T}\mb{A}_0 & \bm{0}\\
\bm{0} & \mb{S}_B^\mathsf{T}\mb{A}_1
\end{bmatrix}^{-1}
\begin{bmatrix}
\mb{S}_A^\mathsf{T} \\ \mb{S}_B^\mathsf{T}
\end{bmatrix}\nonumber\\
&=
\begin{bmatrix}
\mb{A}_0 & \mb{A}_1
\end{bmatrix}
\begin{bmatrix}
\mb{S}_0^\mathsf{T}\mb{A}_0 & \bm{0}\\
\bm{0} & \mb{S}_1^\mathsf{T}\mb{A}_1
\end{bmatrix}^{-1}
\begin{bmatrix}
\mb{S}_0^\mathsf{T} \\ \mb{S}_1^\mathsf{T}
\end{bmatrix}\nonumber\\
&=
\begin{bmatrix}
\mb{S}_{\text{syn},0} & \mb{S}_{\text{syn},1}
\end{bmatrix}
\begin{bmatrix}
\mb{S}_{\text{ana},0}^\mathsf{T} \\ \mb{S}_{\text{ana},1}^\mathsf{T}
\end{bmatrix}\nonumber\\
&=\mb{I}.\label{eq:pr_cond_sbs}
\end{align}
For the second equality, we replace the notations with $\mb{S}_{\text{ana},0}^\mathsf{T}=\mb{S}_0^\mathsf{T}$, $\mb{S}_{\text{ana},1}^\mathsf{T}=\mb{S}_1^\mathsf{T}$, $\mb{S}_{\text{syn},0}=\mb{A}_0(\mb{S}_0^\mathsf{T}\mb{A})^{-1}$, and  $\mb{S}_{\text{ana},1}=\mb{A}_1(\mb{S}_1^\mathsf{T}\mb{A}_1)^{-1}$ \cite{christensen_oblique_2004}. 
According to Theorem 1 and \eqref{eq:pr_cond_sbs}, we directly obtain  $\mb{S}_A^\mathsf{T}=\mb{S}_0^\mathsf{T}$ and $\mb{S}_B^\mathsf{T}=\mb{S}_1^\mathsf{T}$ if sampling and reconstruction are performed on the bipartite graph.
This statement also follows for graph spectral sampling \cite{sakiyama_two-channel_2019}.

\section{RECOVERY EXPERIMENTS}

\begin{figure}[t!]
\centering
 \subfigure[][Original]
  {\centering\includegraphics[width=0.47\linewidth]{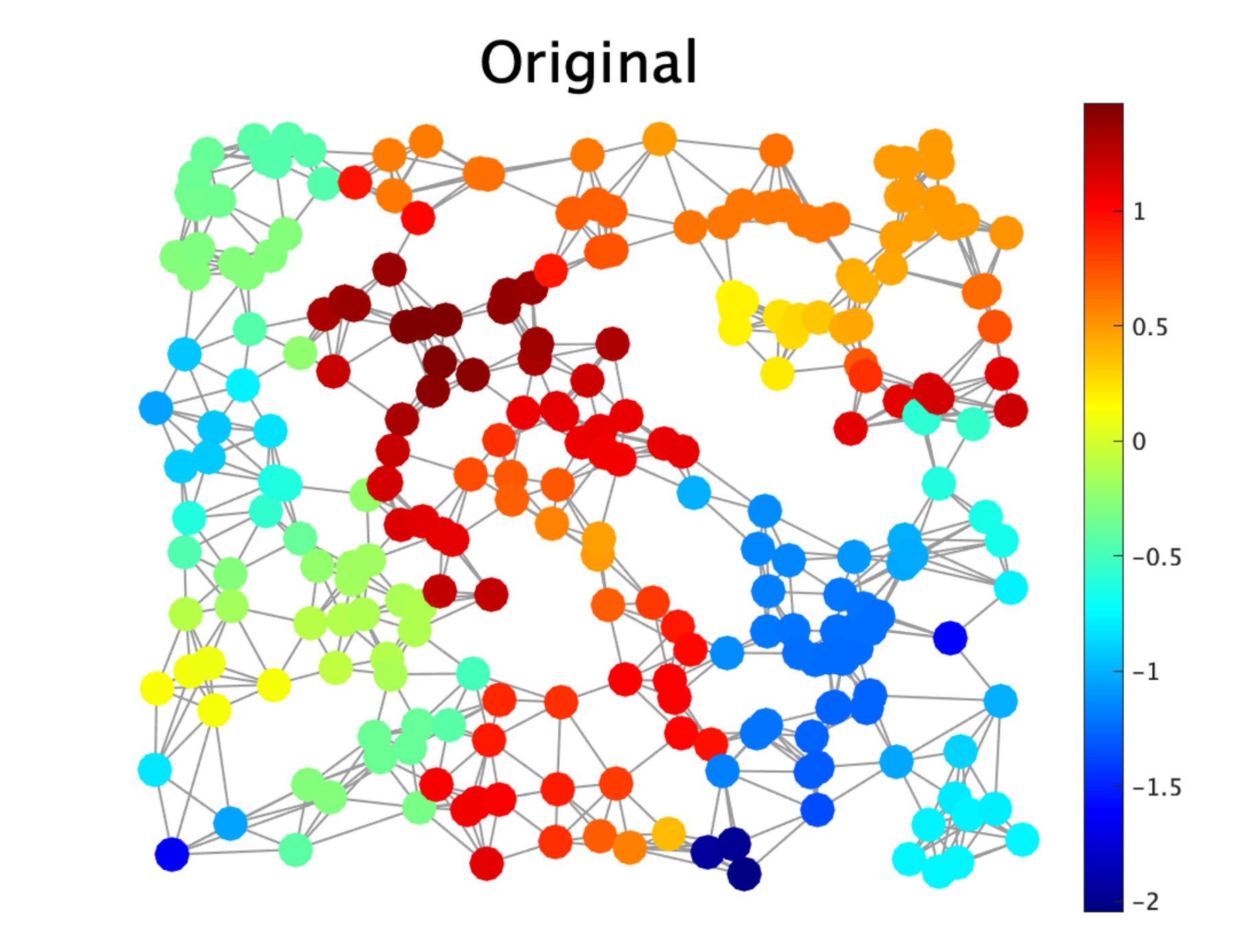} \label{mbp_ori}}
 \subfigure[][Proposed]
  {\centering\includegraphics[width=0.47\linewidth]{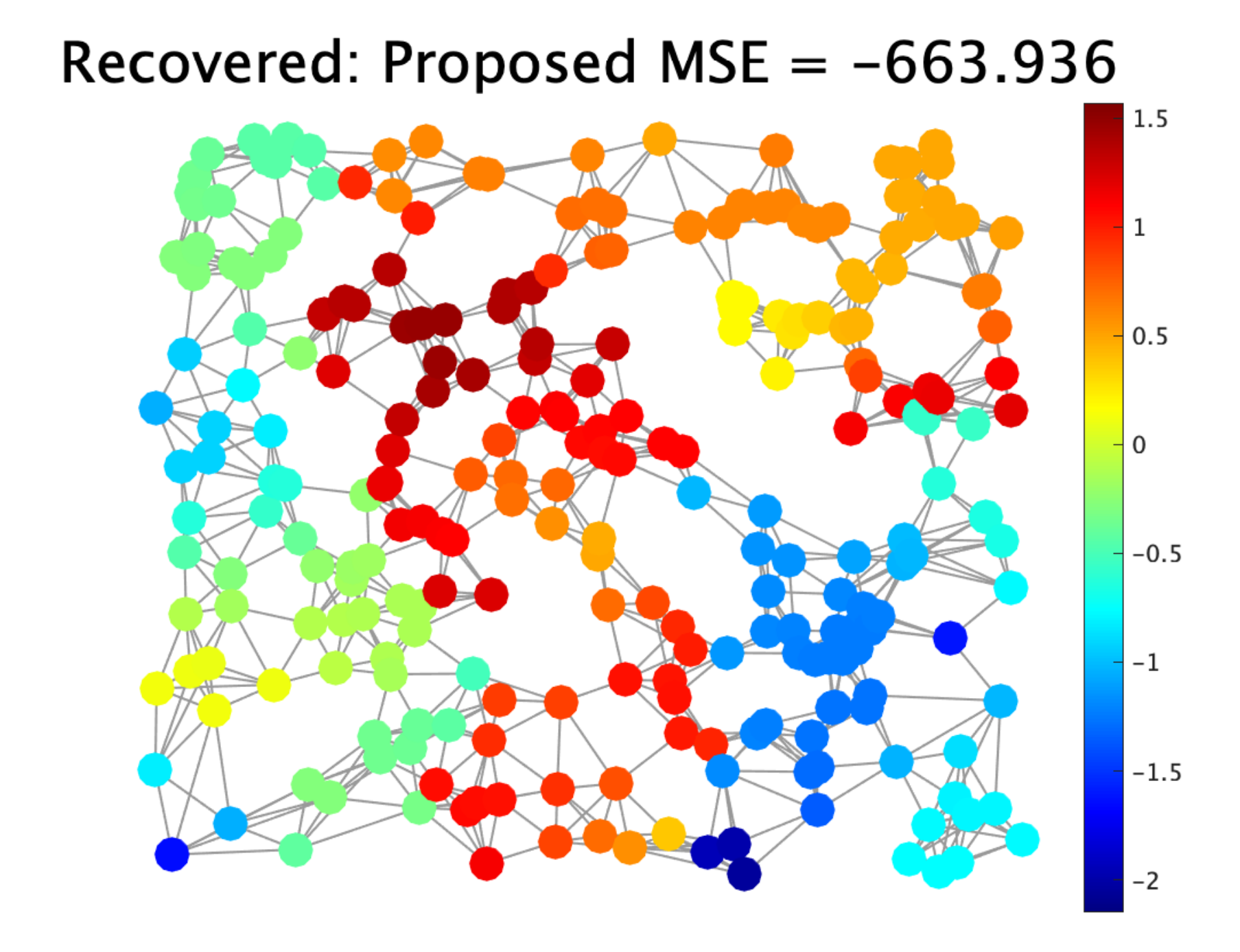} \label{pws_prop}}
 \subfigure[][Recon. w/ ch. 1]
  {\centering\includegraphics[width=0.47\linewidth]{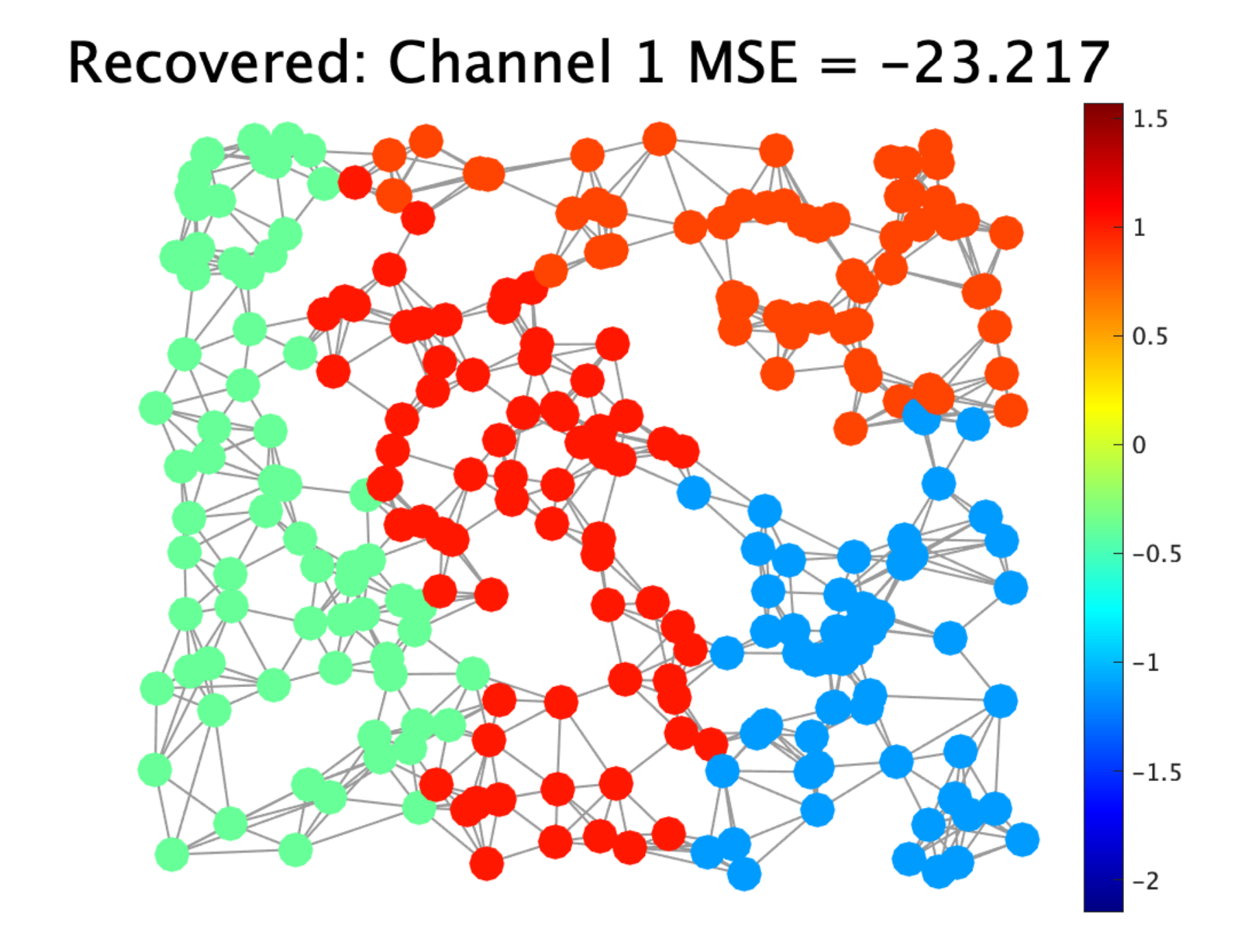} \label{pws_ch1}}
 \subfigure[][Recon. w/ ch. 2]
  {\centering\includegraphics[width=0.47\linewidth]{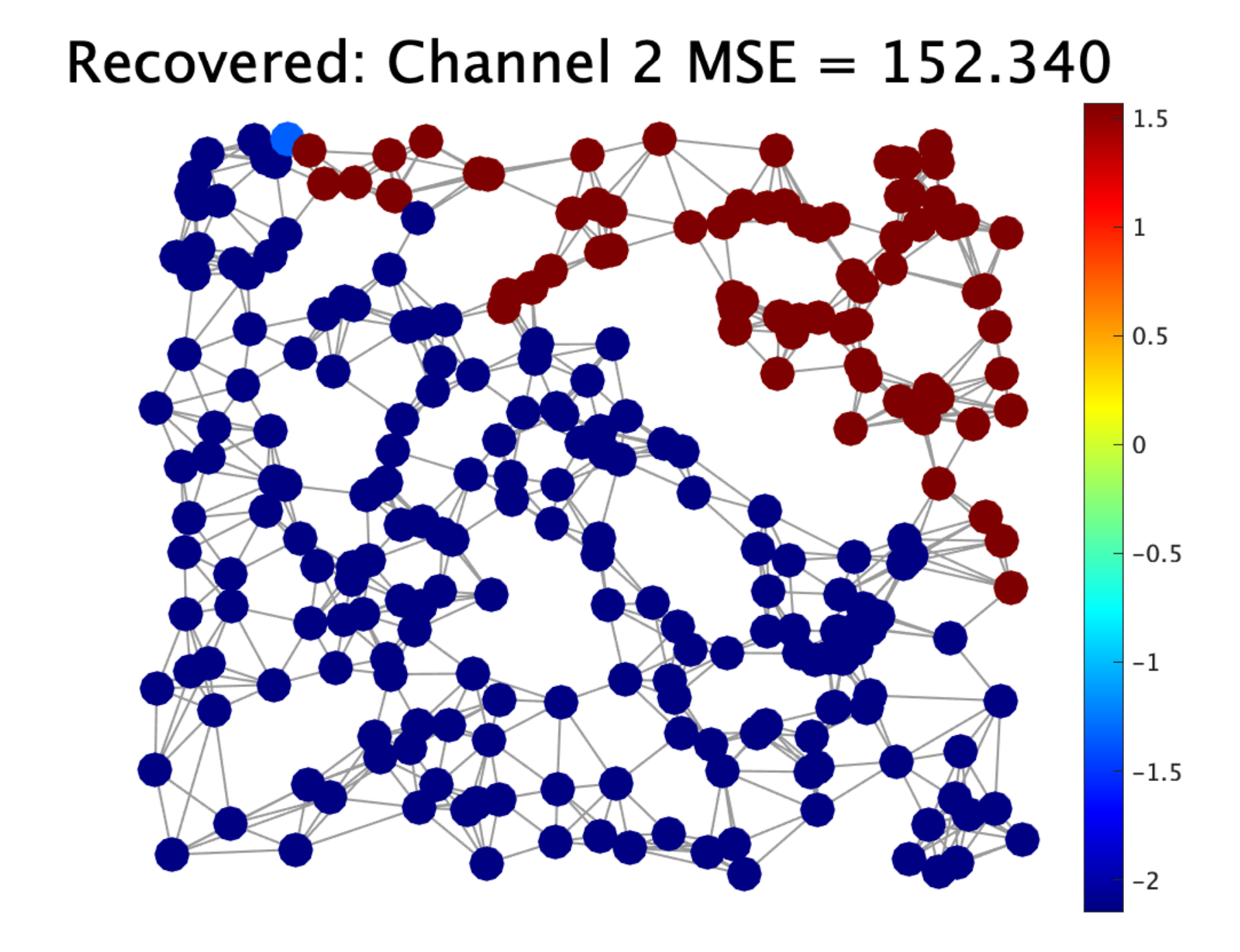} \label{pws_ch2}}
 \subfigure[][GraphQMF]
  {\centering\includegraphics[width=0.47\linewidth]{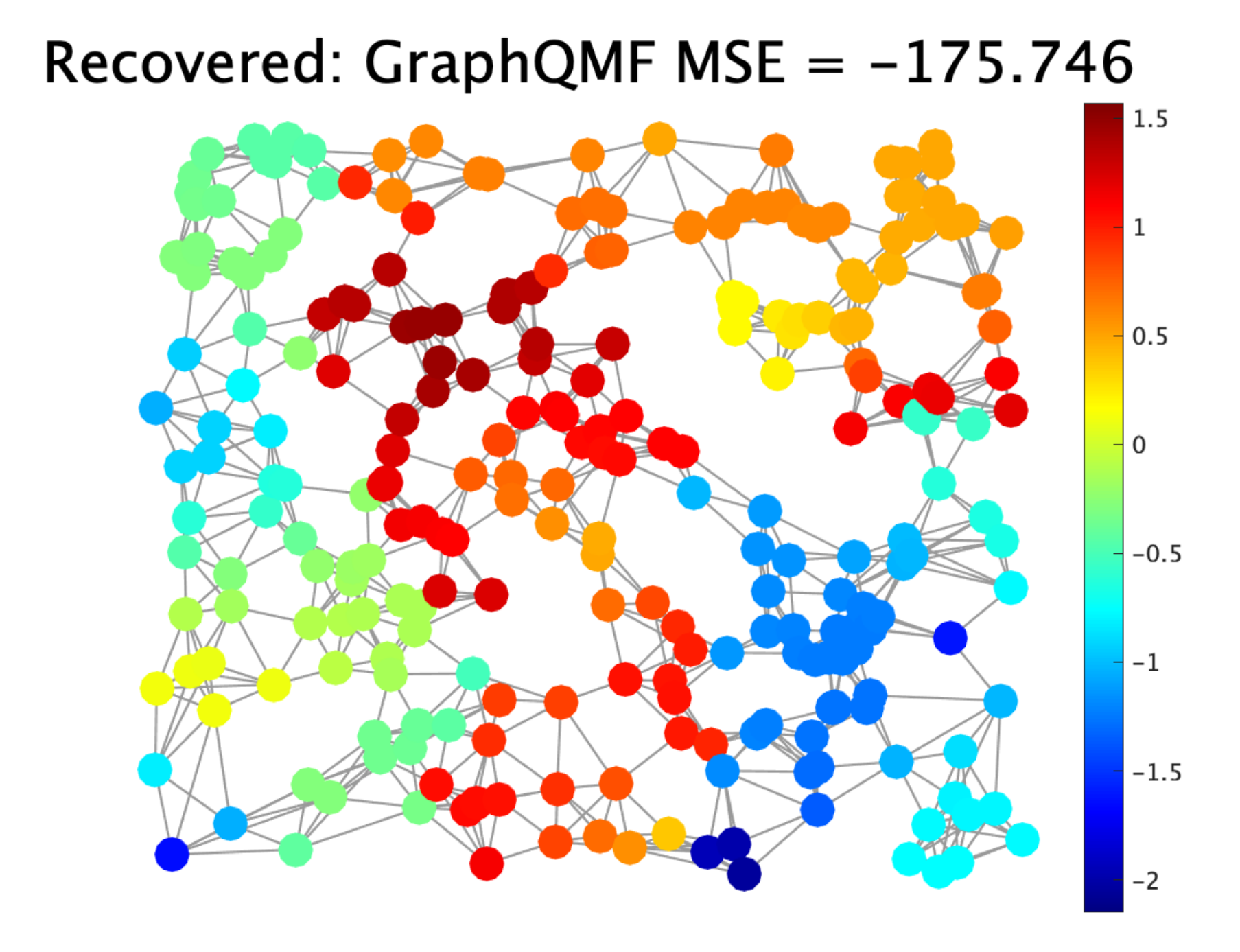} \label{pws_graphQMF}}
 \subfigure[][GraphBior]
  {\centering\includegraphics[width=0.47\linewidth]{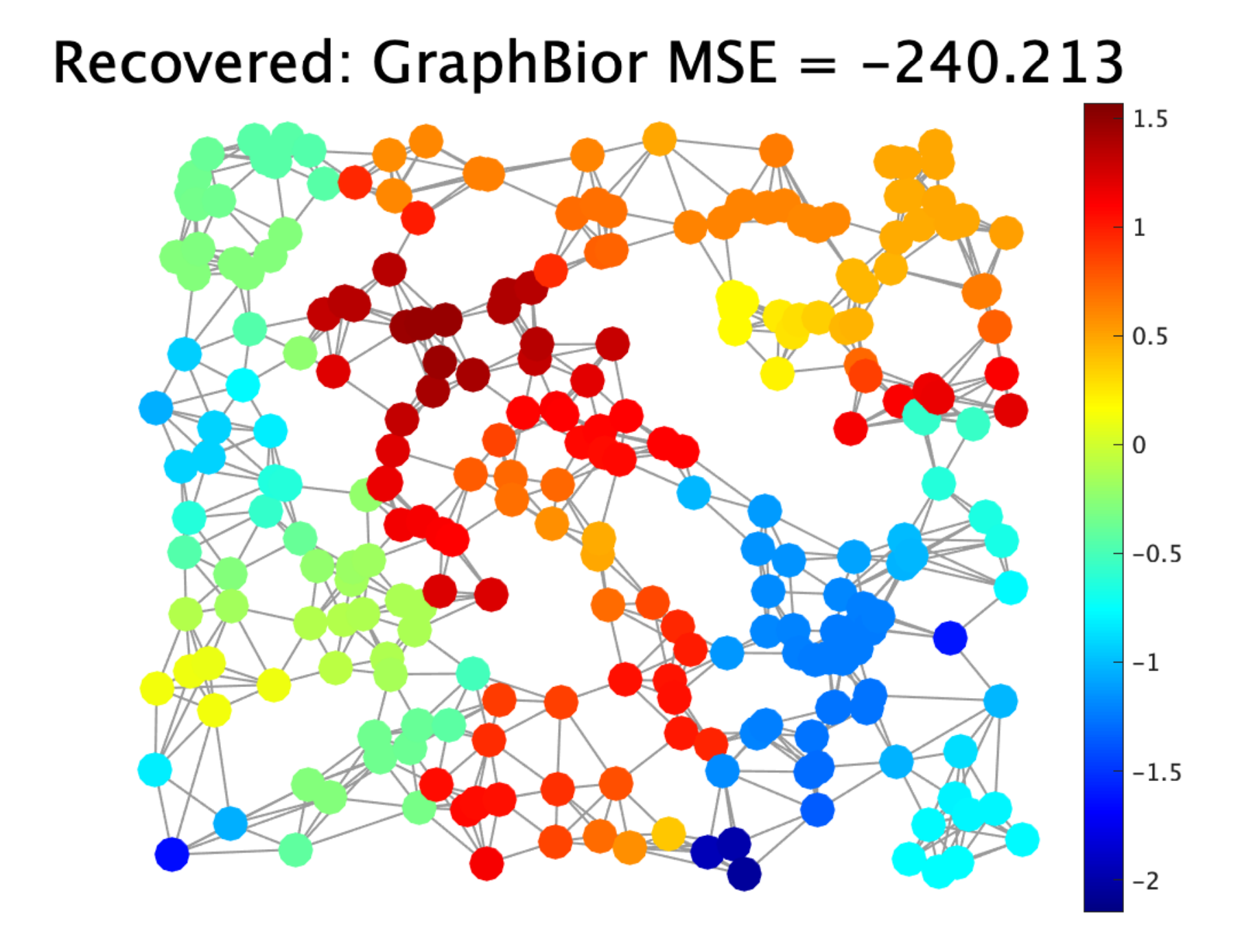} \label{pws_graphBior}}
\caption{Examples of recovery for PWS graph signals on sensor graphs.}
\label{exp:multi_recov_pws}
\end{figure}

\begin{figure}[t!]
\centering
 \subfigure[][Original]
  {\centering\includegraphics[width=0.47\linewidth]{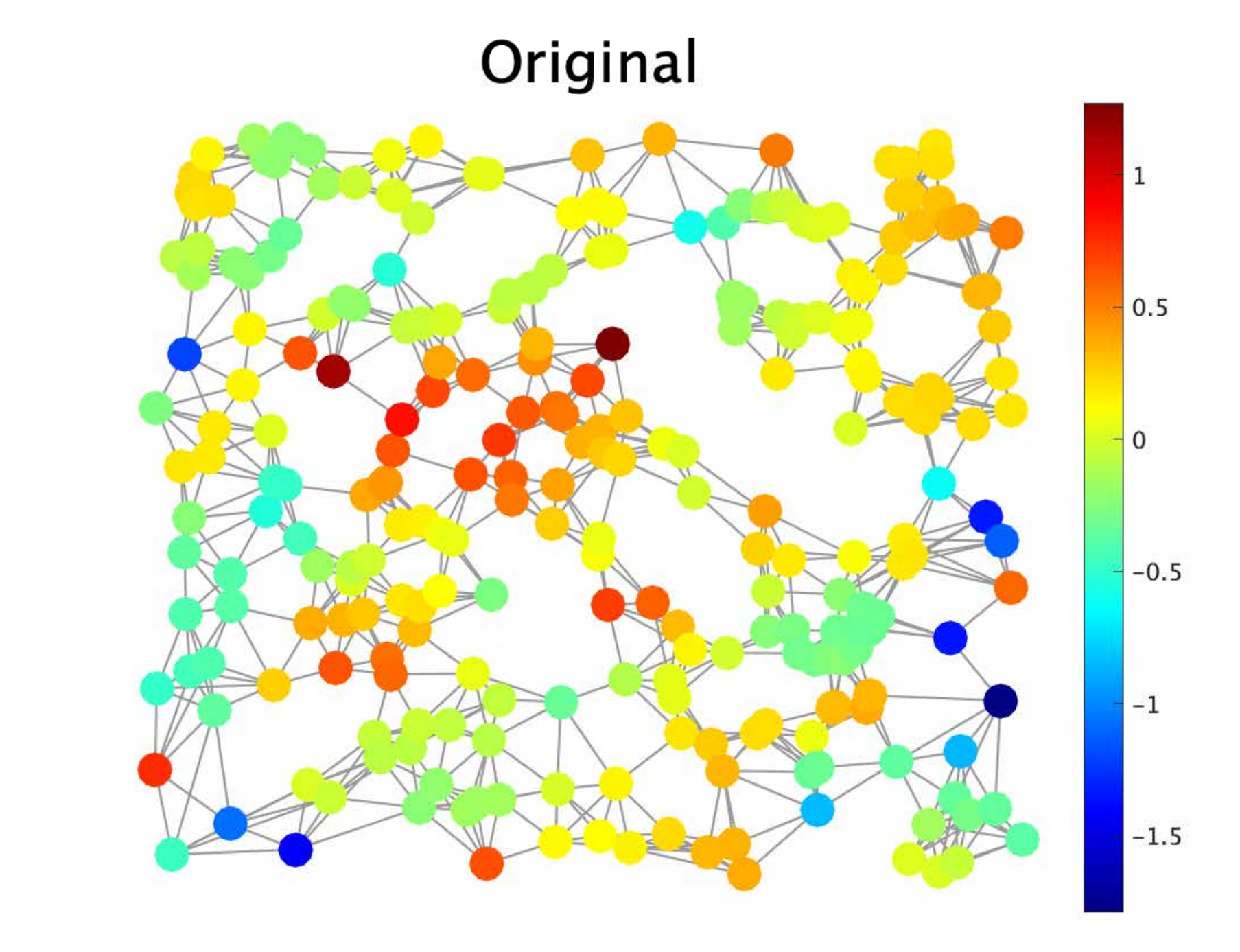} \label{ubp_ori}}
 \subfigure[][Proposed]
  {\centering\includegraphics[width=0.47\linewidth]{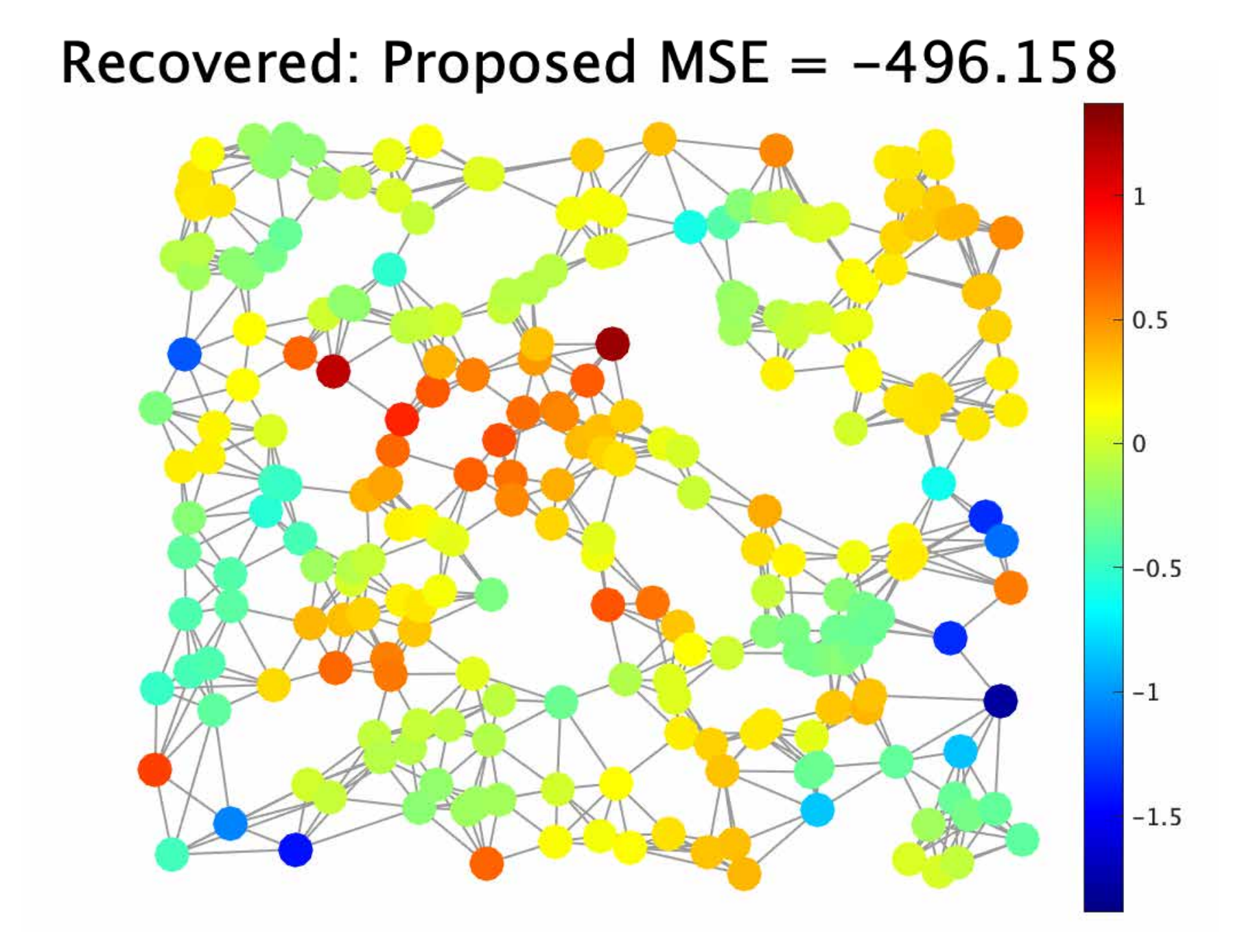} \label{ubp_prop}}
 \subfigure[][Recon. w/ ch. 1]
  {\centering\includegraphics[width=0.47\linewidth]{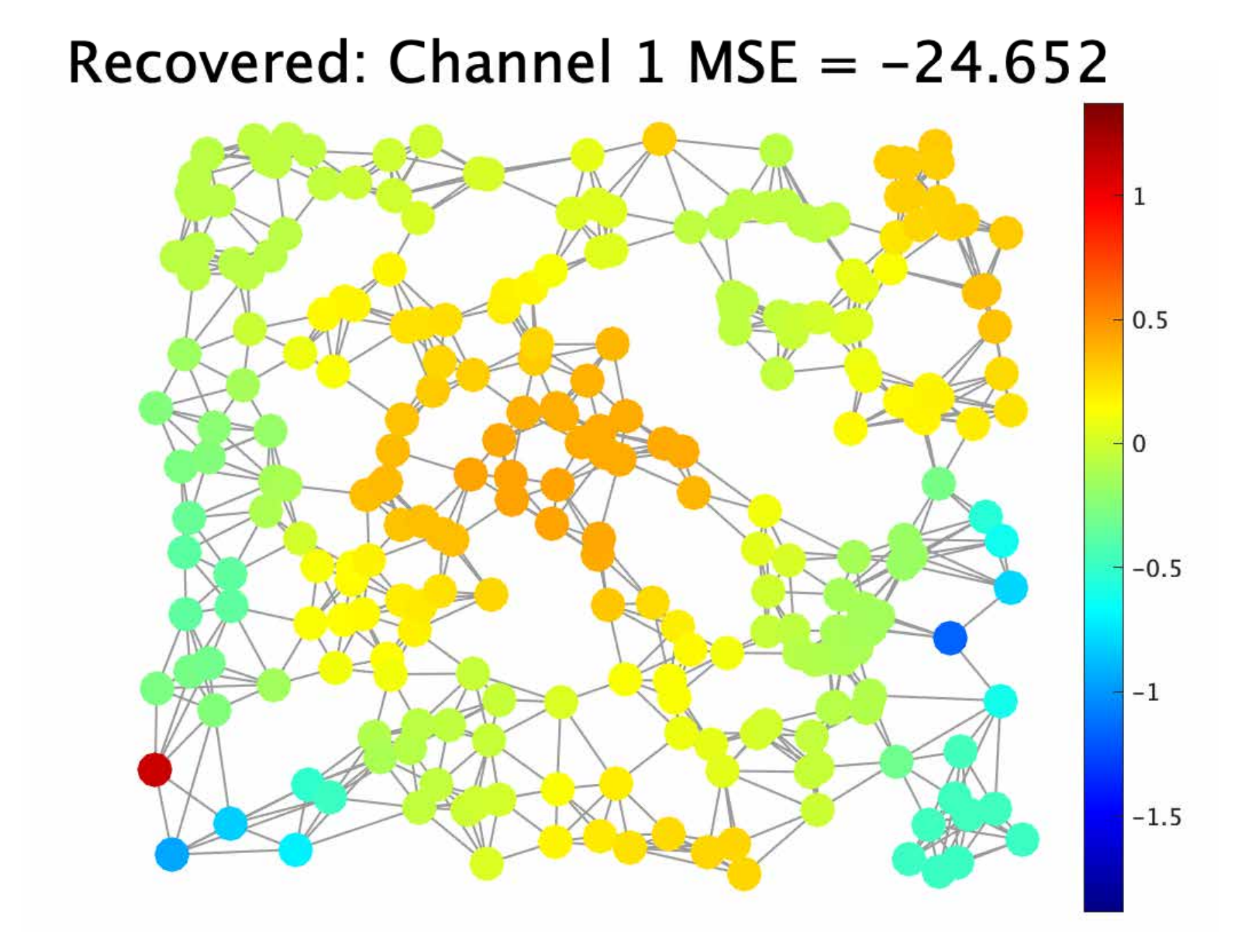} \label{ubp_ch1}}
 \subfigure[][Recon. w/ ch. 2]
  {\centering\includegraphics[width=0.47\linewidth]{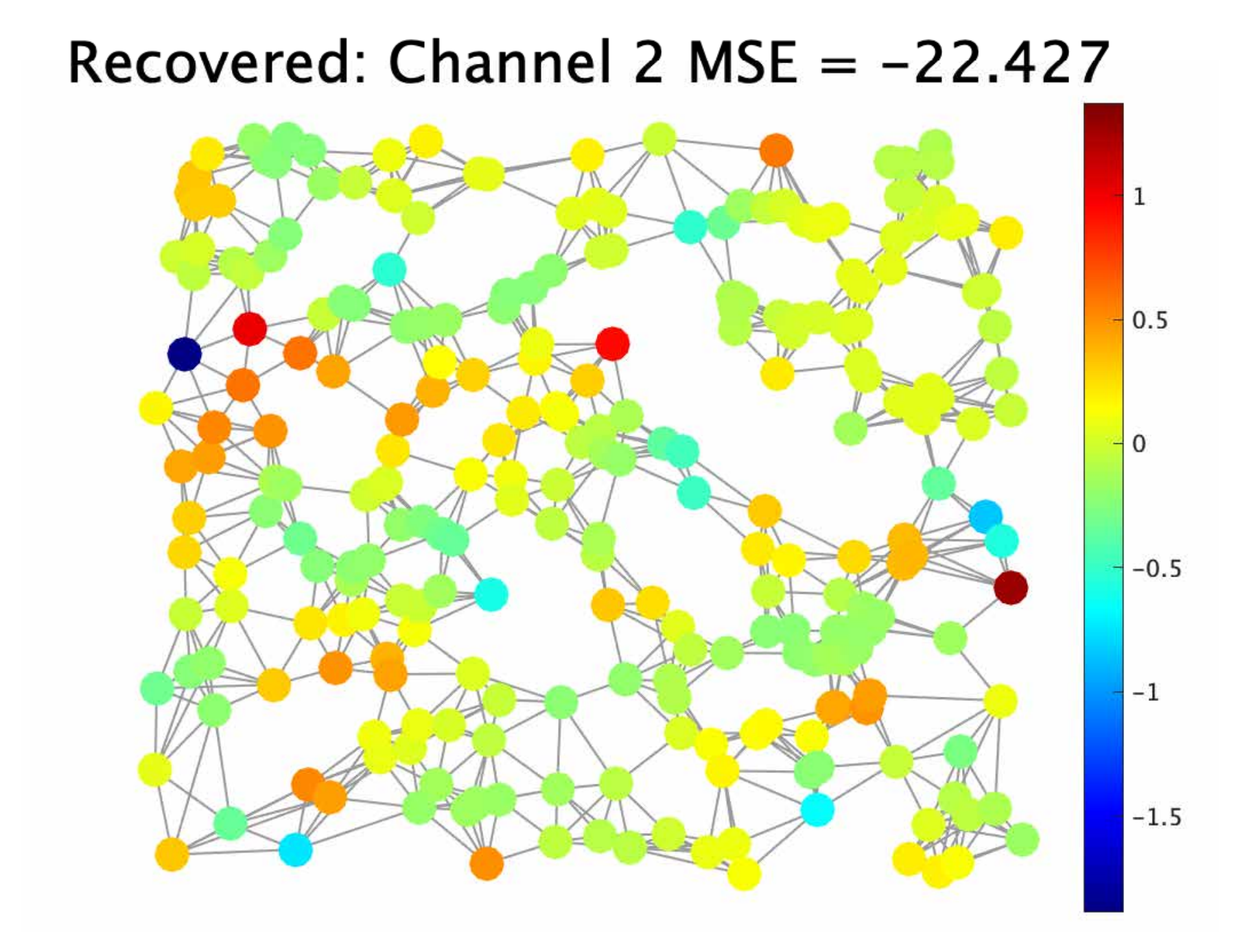} \label{ubp_ch2}}
 \subfigure[][GraphQMF]
  {\centering\includegraphics[width=0.47\linewidth]{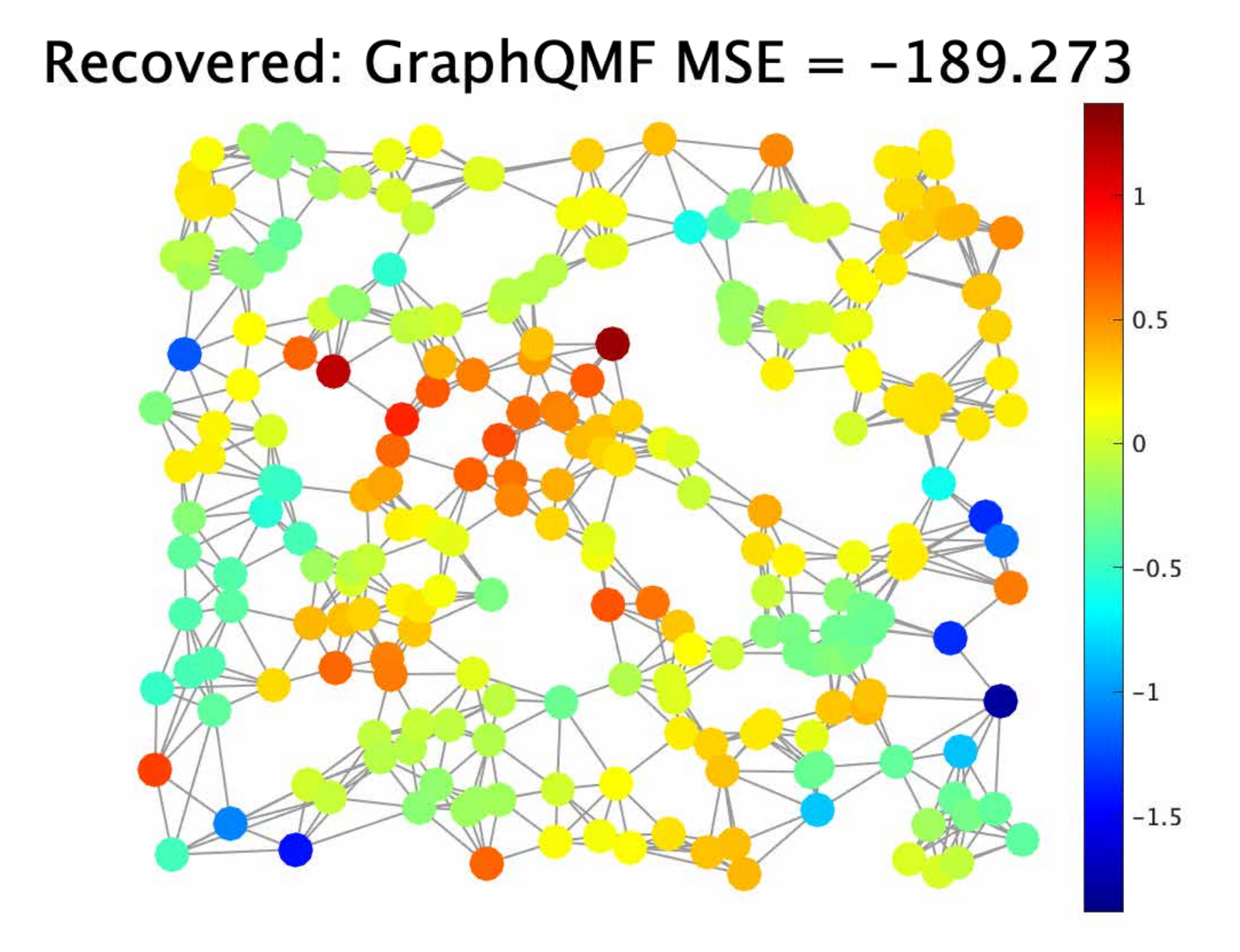} \label{ubp_graphQMF}}
 \subfigure[][GraphBior]
  {\centering\includegraphics[width=0.47\linewidth]{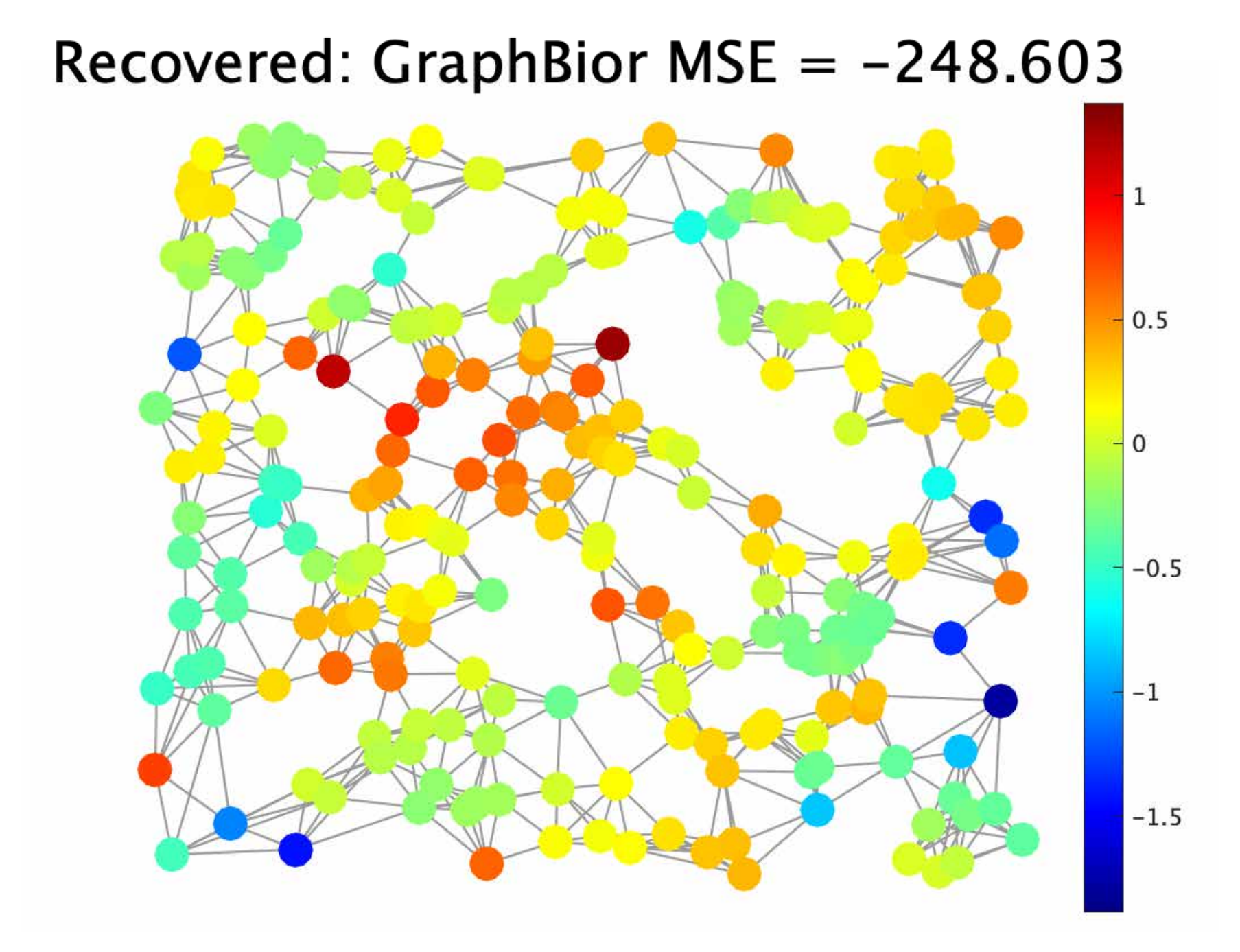} \label{ubp_graphBior}}
\caption{Examples of recovery for UBP graph signals on sensor graphs.}
\label{exp:multi_recov_ubp}
\end{figure}

In this section, we validate performances of the proposed SSS for the graph MCS. We perform graph signal recovery experiments with synthetic and real-world graphs.

\begin{table}[t!]
\centering
\caption{Average MSEs of 30 independent runs. RS is the random sensor graph and SR is the Swiss roll graph.}\label{tab:ave_mse}
\setlength{\tabcolsep}{0.12em}
\begin{tabular}{cc|ccccc}
\hline
\multicolumn{2}{c|}{Methods} & Proposed & \begin{tabular}[c]{@{}c@{}}Recon. \\ w/ ch.~1\end{tabular} & \begin{tabular}[c]{@{}c@{}}Recon. \\ w/ ch.~2\end{tabular} & GraphQMF & GraphBior \\ \hline\hline
\multicolumn{1}{c|}{\multirow{2}{*}{RS}} & PWS & \textbf{-619.04} & -18.81 & 194.68 & -168.25 & -230.60 \\
\multicolumn{1}{c|}{} & UBP & \textbf{-465.98} & -18.89 & 24.04 & -186.39 & -249.12 \\ \hline
\multicolumn{1}{c|}{\multirow{2}{*}{SR}} & PWS & \textbf{-654.56} & -19.93 & 174.18 & -172.95 & -231.03 \\
\multicolumn{1}{c|}{} & UBP & \textbf{-382.10} & -17.00 & -9.40 & -189.39 & -248.31 \\ \hline
\end{tabular}
\end{table}

\subsection{SYNTHETIC GRAPHS}

\subsubsection{SETUP}
The experiments are performed on random sensor and Swiss roll graphs with $N=256$.
The sampling ratio is set to $K=|\mathcal{M}_0|=N/2$. For both graphs, we generate two synthetic graph signals:
\begin{description}
\item[Piecewise smooth (PWS) graph signals\cite{chen_multiresolution_2018}:] It is composed of piecewise constant components and smooth components:
\begin{align}
\bm{x}= [\bm{1}_{\mathcal{T}_1}\cdots\bm{1}_{\mathcal{T}_P}]\bm{d}_1+\mb{U}_{\mathcal{B}}\bm{d}_2,\label{eq:pws_model}
\end{align}
where the number of clusters is set to $P=4$ and the bandwidth is set to $|\mathcal{B}|=K/4$.
\item[Union of band-pass (UBP) graph signals:] It is composed of several band-pass components:
\begin{align}
\bm{x}=\mb{U}\sum_{\ell=1}^2 G_\ell(\mb{\Lambda})\bm{d}_\ell,
\end{align}
where generation (synthesis) filters $G_\ell(\mb{\Lambda})$ are implemented by Meyer wavelet kernel\cite{perraudin_gspbox:_2014}. 
\end{description}
In both cases, analysis filters are given by Mexican hat wavelet kernel\cite{perraudin_gspbox:_2014}. 

We calculate the average MSE of reconstructed graph signals for 30 independent runs. We compare the result to the well-known BGFBs, graphQMF \cite{narang_perfect_2012} and graphBior \cite{narang_compact_2013}.
GraphQMF is an orthogonal graph filter bank, which requires eigen-decomposition to achieve exact PR. This is because the polynomial approximation of filters results in reconstruction errors. In contrast, GraphBior is a PR graph filter bank that utilizes polynomial graph filters and can be implemented without requiring eigen-decomposition of the graph operator.

We apply the Harary's decomposition algorithm \cite{harary_biparticity_1977} to those BGFBs for the graph bipartition.
We also perform reconstruction with the single channel sampling as a benchmark. For all methods, we implement analysis and synthesis graph filters with the $50$th order polynomial approximation \cite{shuman_chebyshev_2011}.

\begin{figure}[t!]
\centering
 \subfigure[][Original]
  {\centering\includegraphics[width=0.47\linewidth]{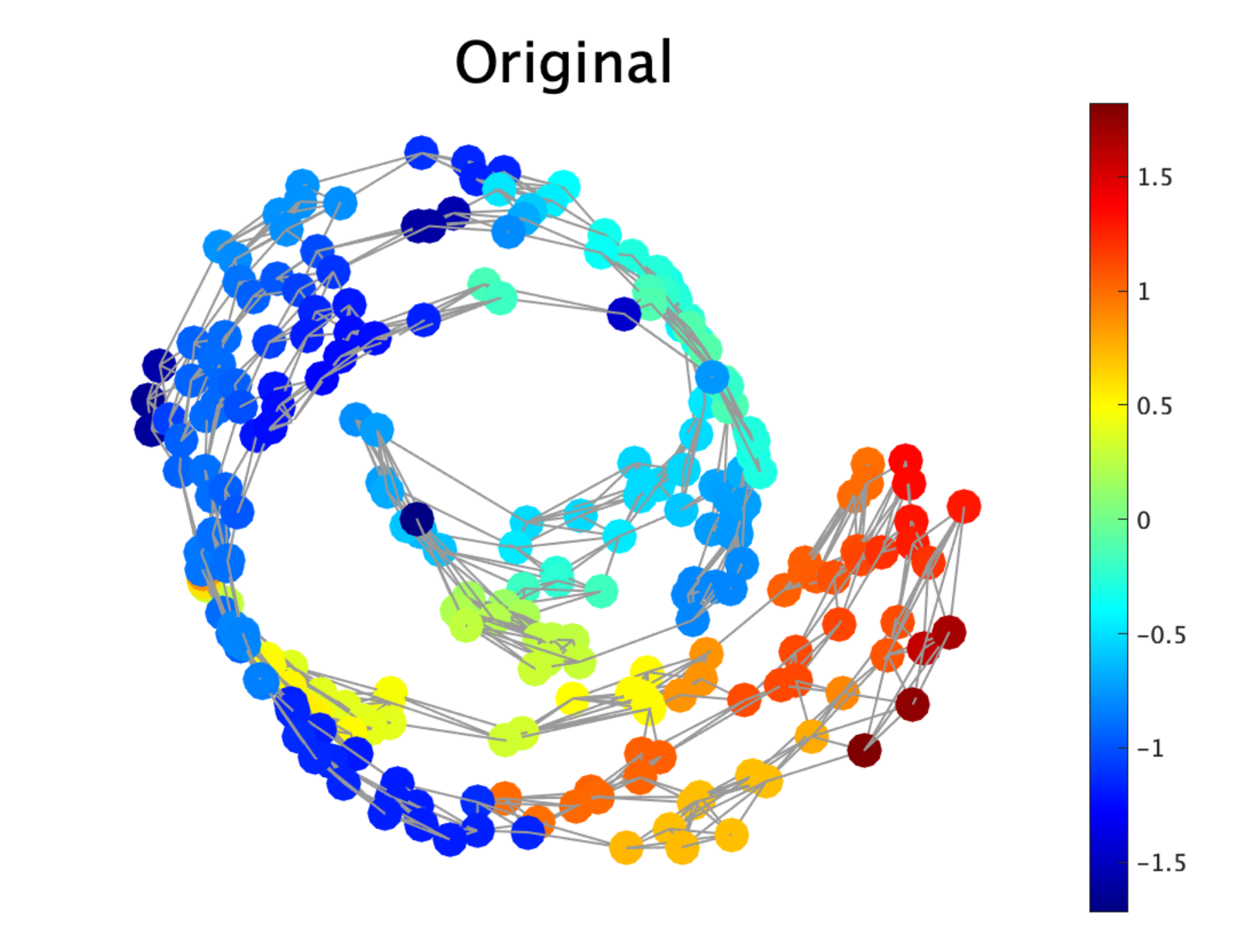} \label{mbp_ori_com}}
 \subfigure[][Proposed]
  {\centering\includegraphics[width=0.47\linewidth]{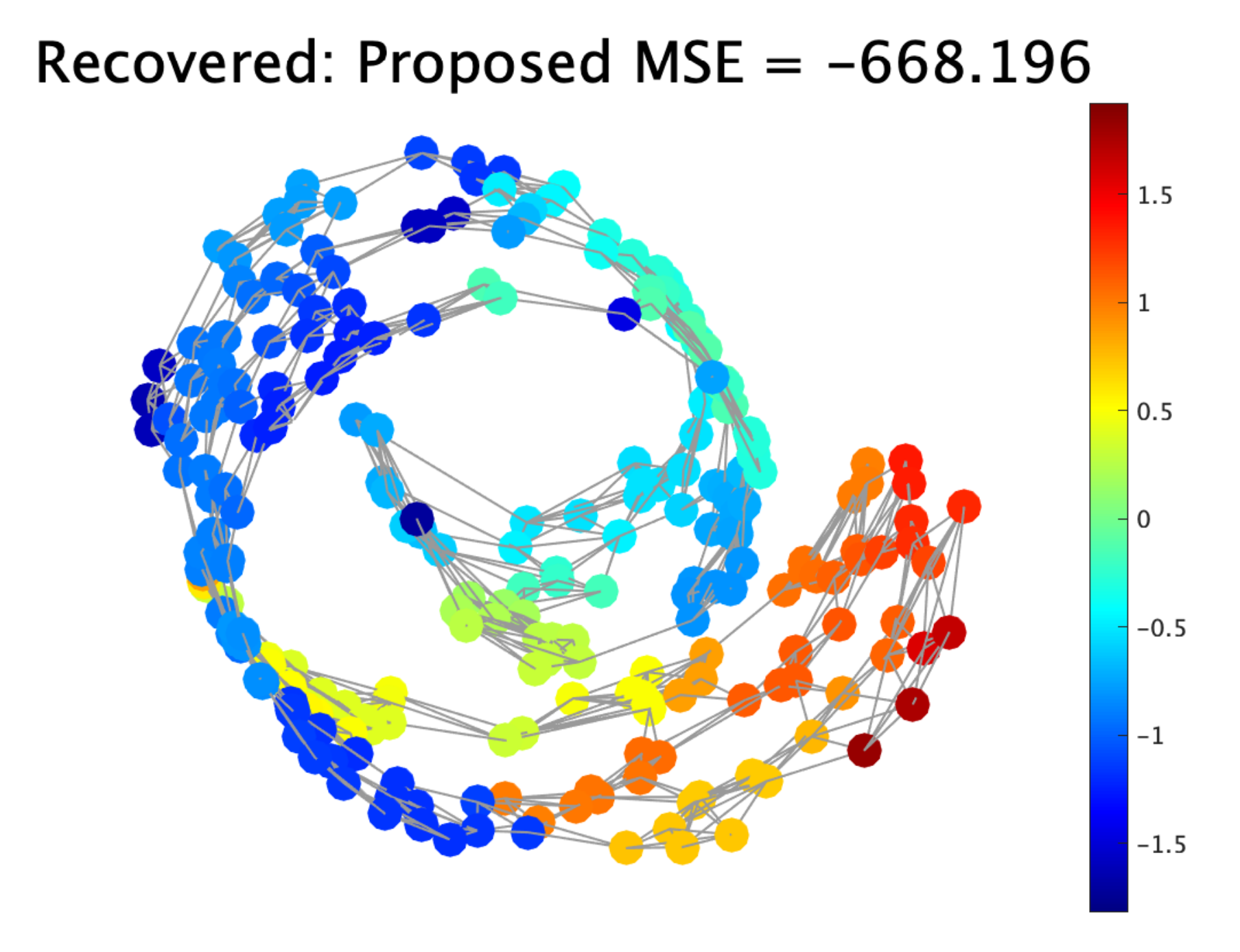} \label{pws_prop_com}}
 \subfigure[][Recon. w/ ch. 1]
  {\centering\includegraphics[width=0.47\linewidth]{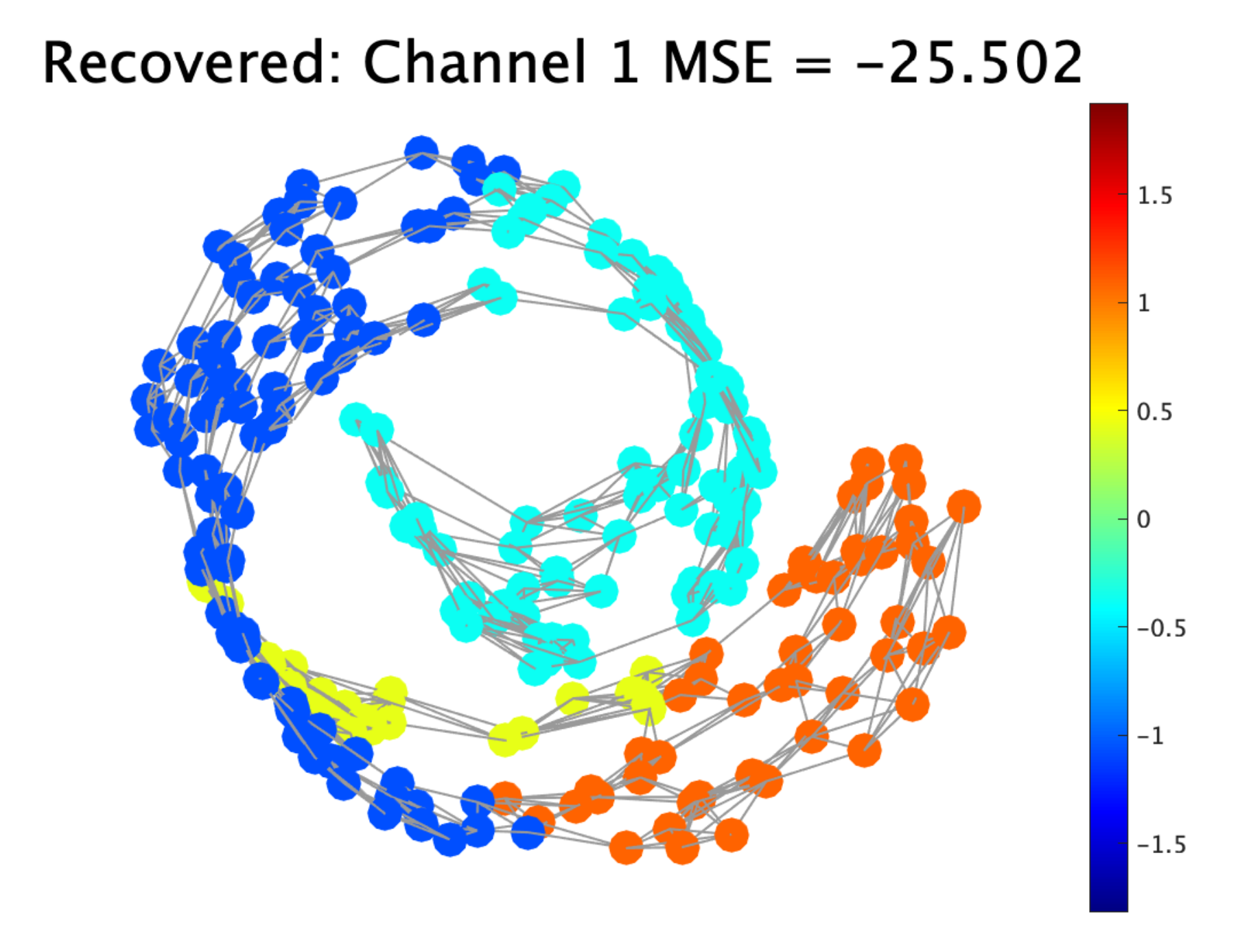} \label{pws_ch1_com}}
 \subfigure[][Recon. w/ ch. 2]
  {\centering\includegraphics[width=0.47\linewidth]{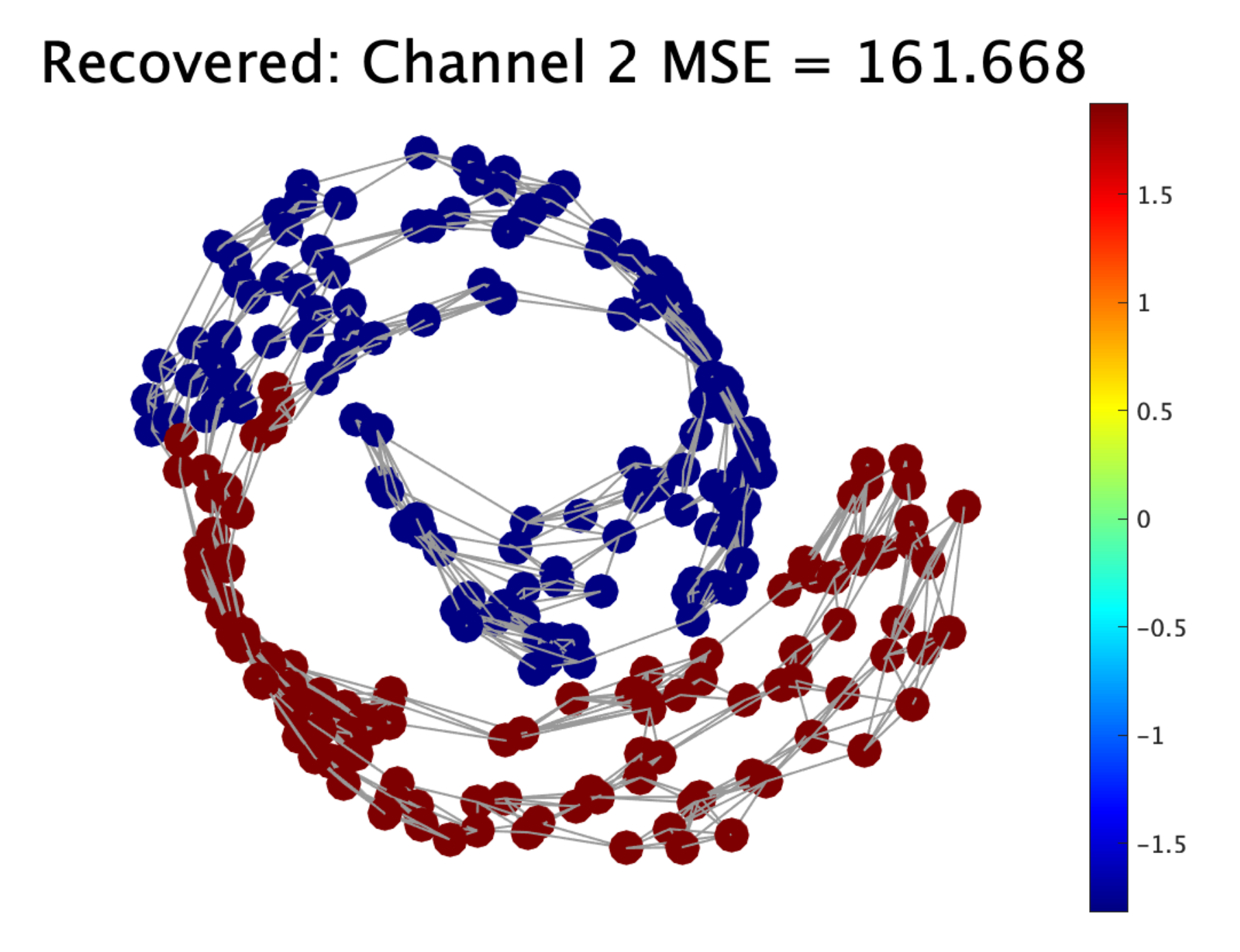} \label{pws_ch2_com}}
 \subfigure[][GraphQMF]
  {\centering\includegraphics[width=0.47\linewidth]{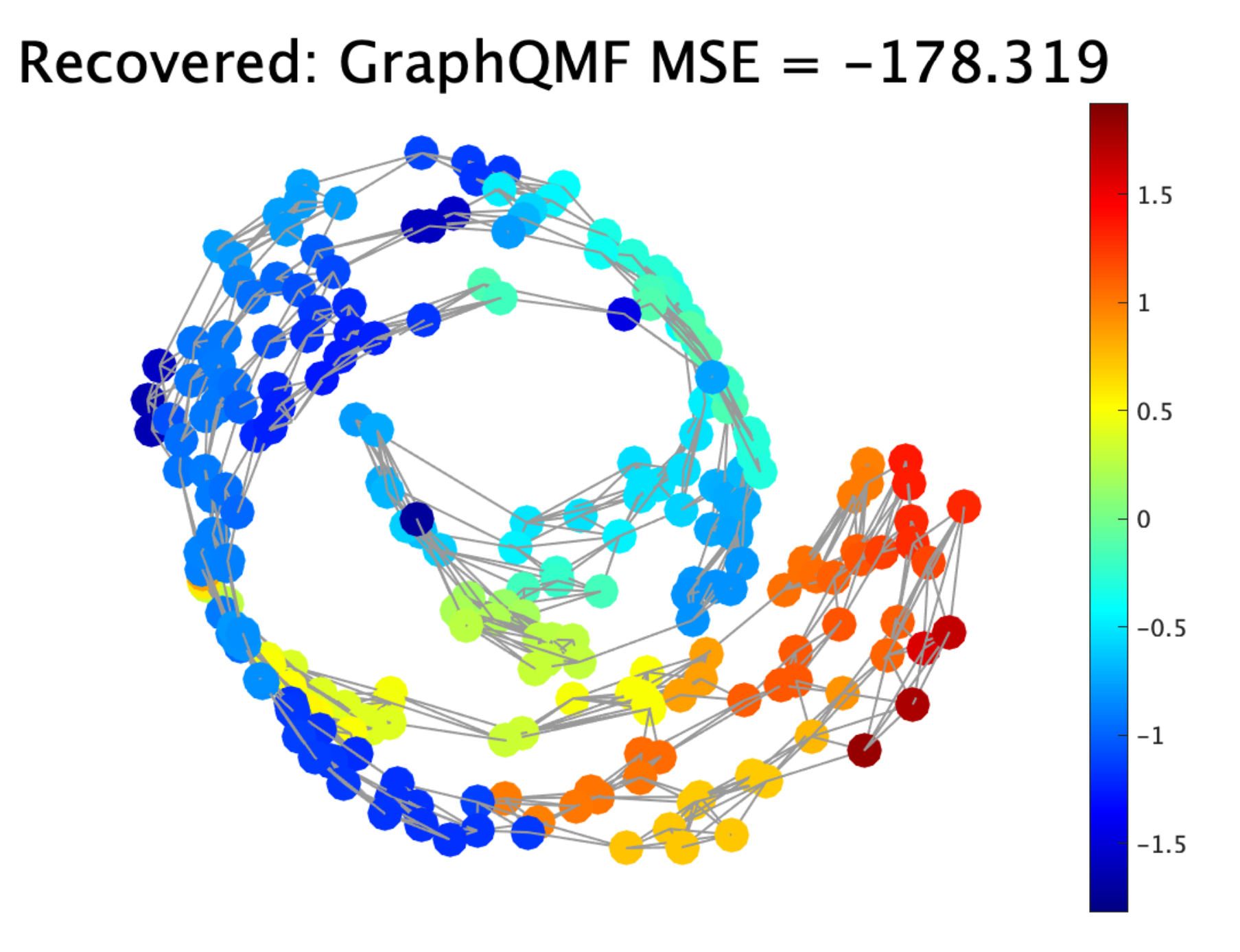} \label{pws_graphQMF_com}}
 \subfigure[][GraphBior]
  {\centering\includegraphics[width=0.47\linewidth]{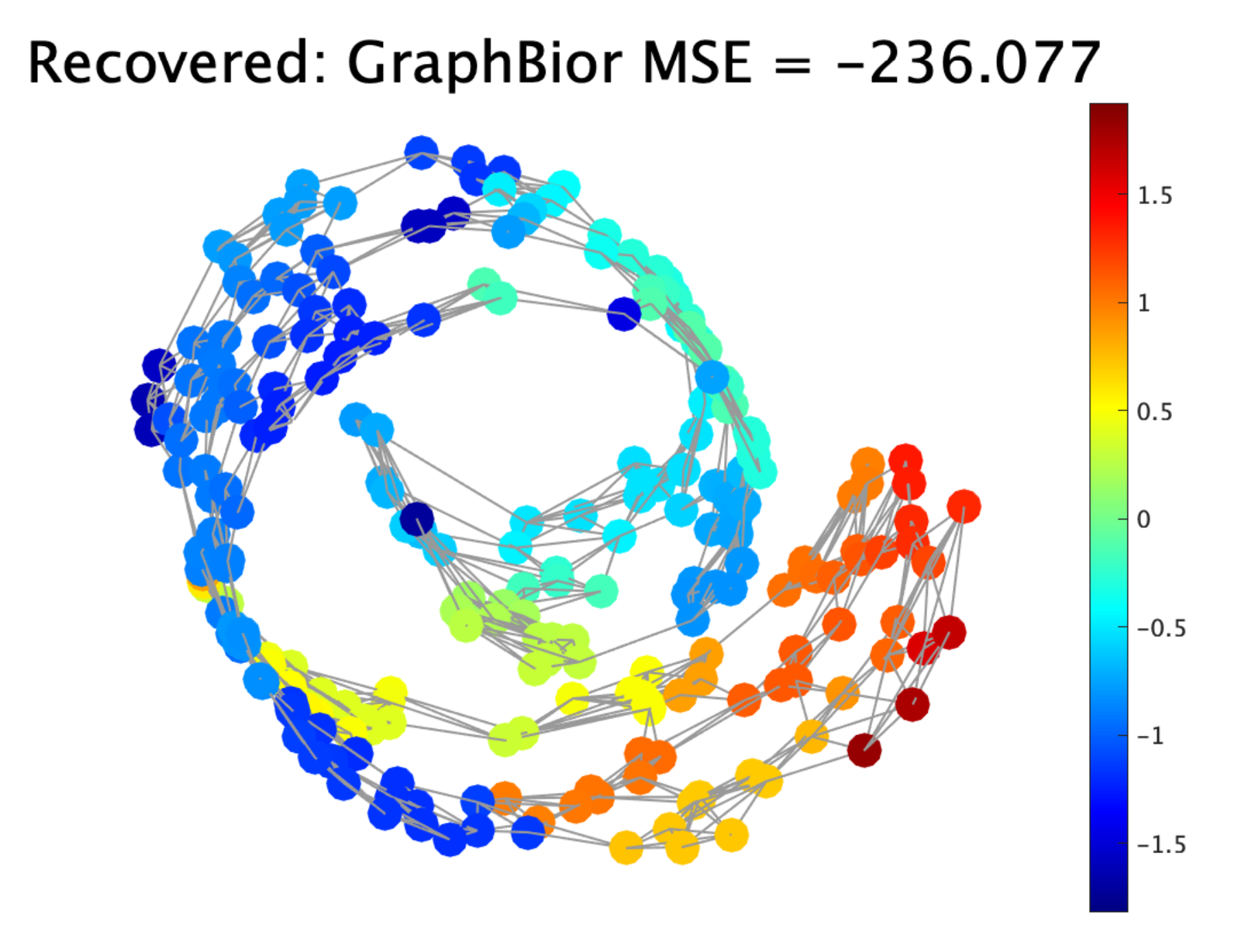} \label{pws_graphBior_com}}
\caption{Examples of recovery for PWS graph signals on swiss roll graphs.}
\label{exp:multi_recov_pws_com}
\end{figure}

\begin{figure}[t!]
\centering
 \subfigure[][Original]
  {\centering\includegraphics[width=0.47\linewidth]{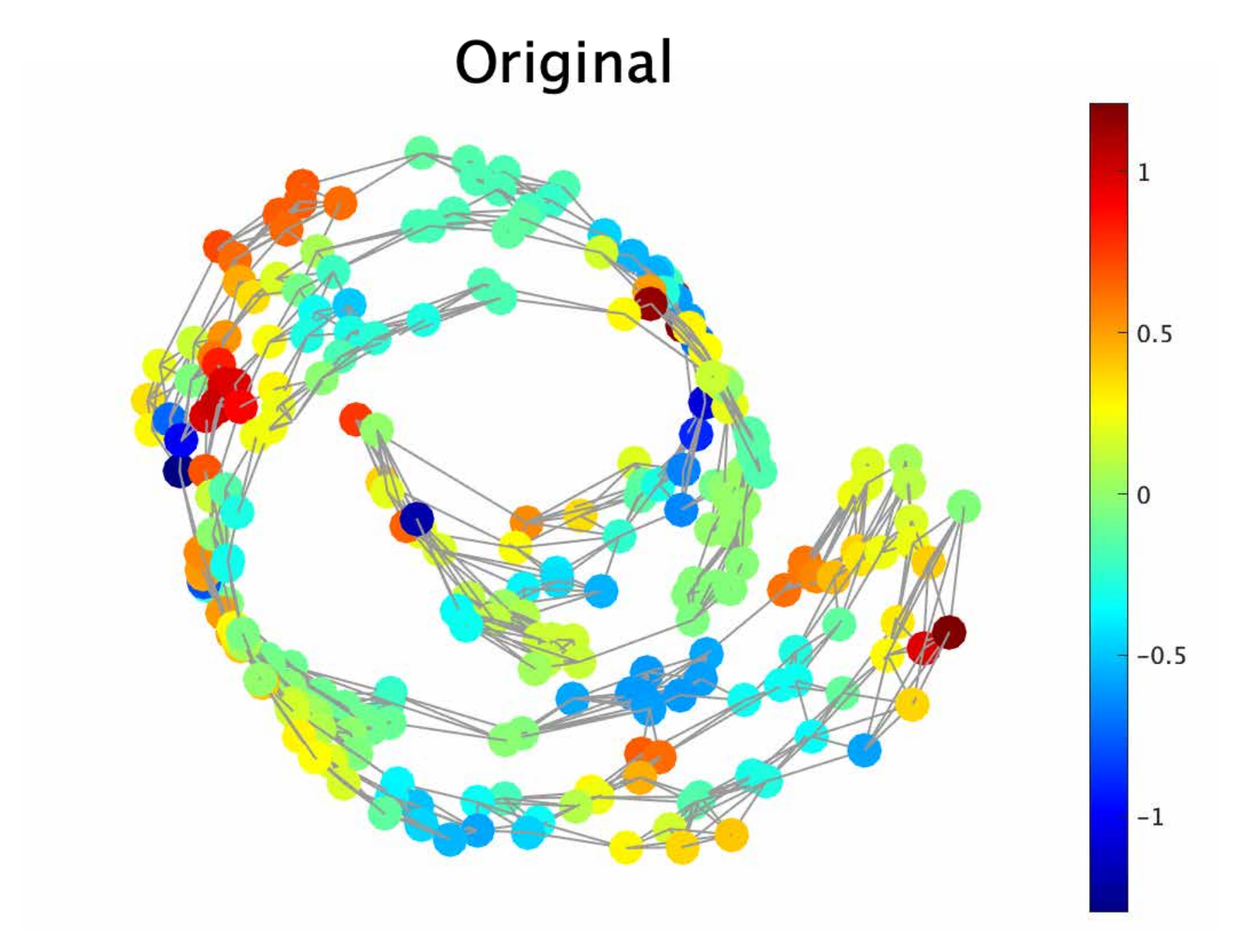} \label{ubp_ori_com}}
 \subfigure[][Proposed]
  {\centering\includegraphics[width=0.47\linewidth]{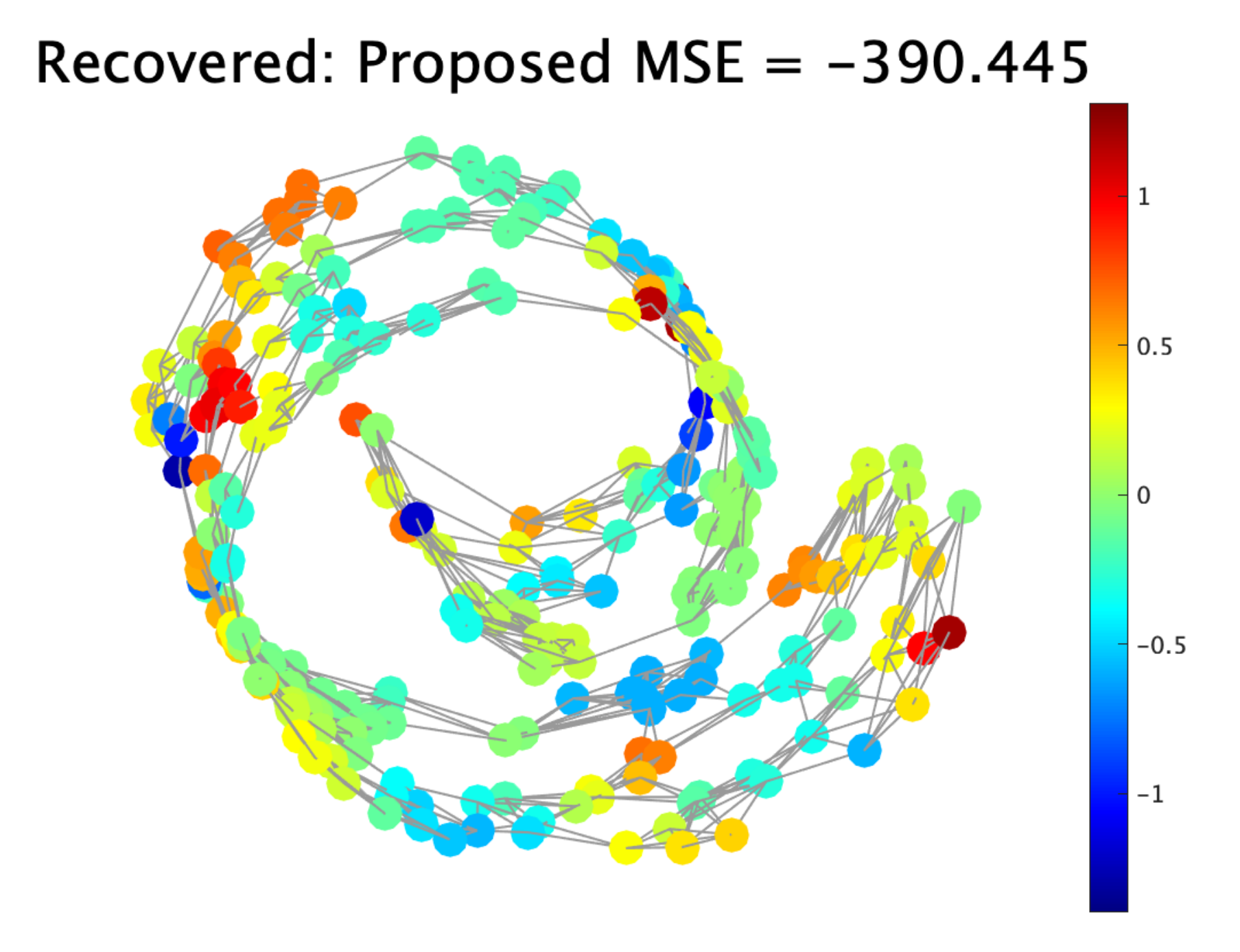} \label{ubp_prop_com}}
 \subfigure[][Recon. w/ ch. 1]
  {\centering\includegraphics[width=0.47\linewidth]{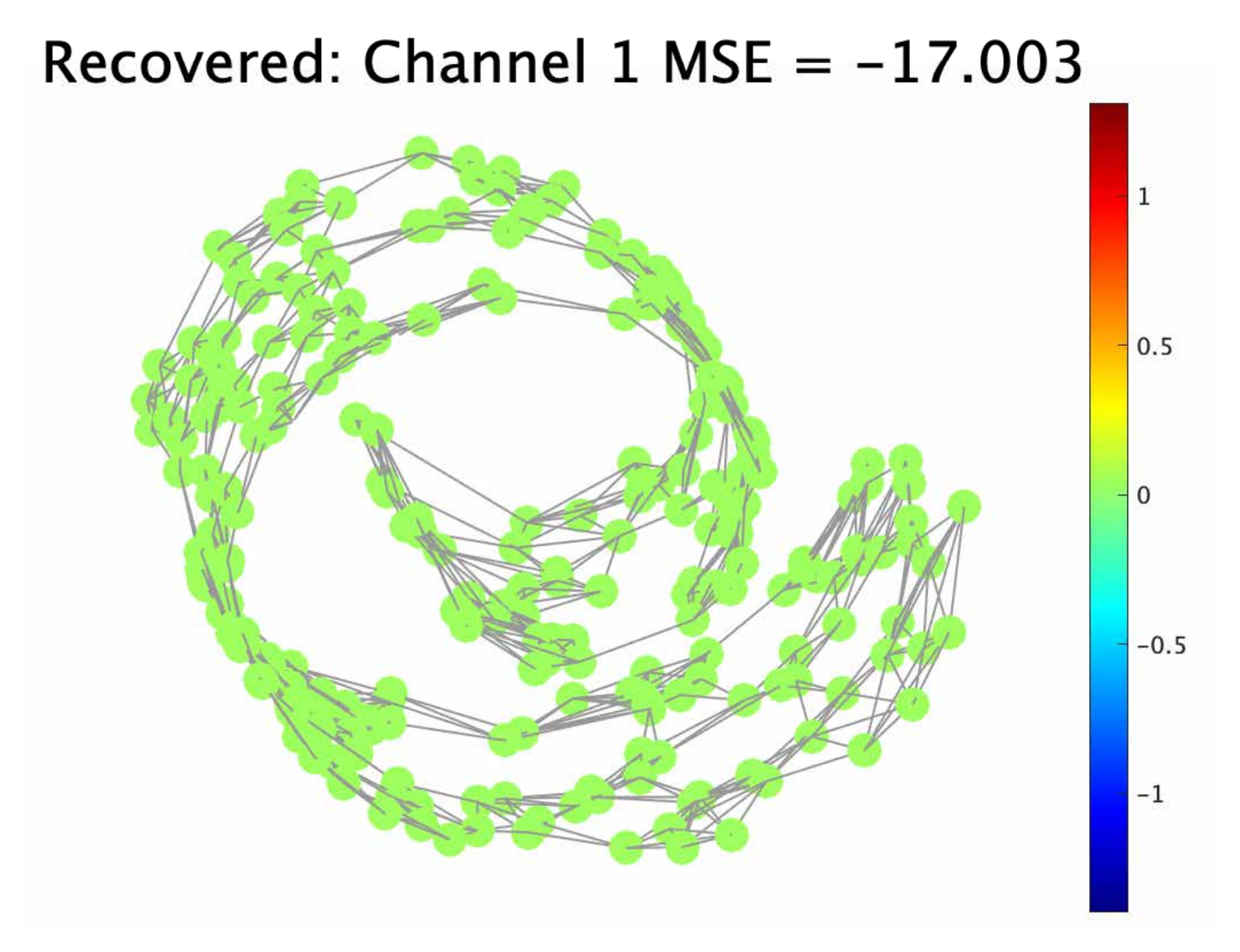} \label{ubp_ch1_com}}
 \subfigure[][Recon. w/ ch. 2]
  {\centering\includegraphics[width=0.47\linewidth]{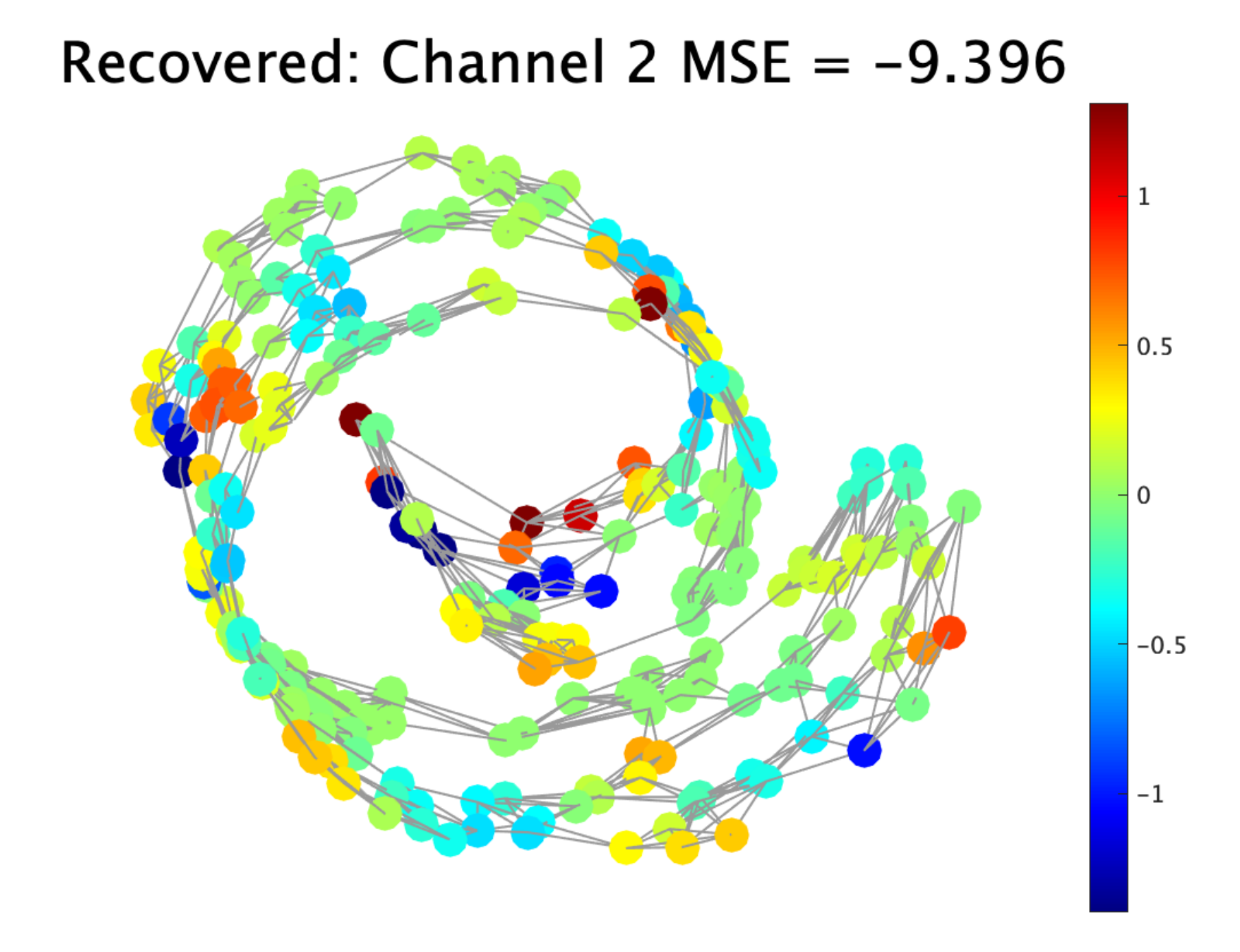} \label{ubp_ch2_com}}
 \subfigure[][GraphQMF]
  {\centering\includegraphics[width=0.47\linewidth]{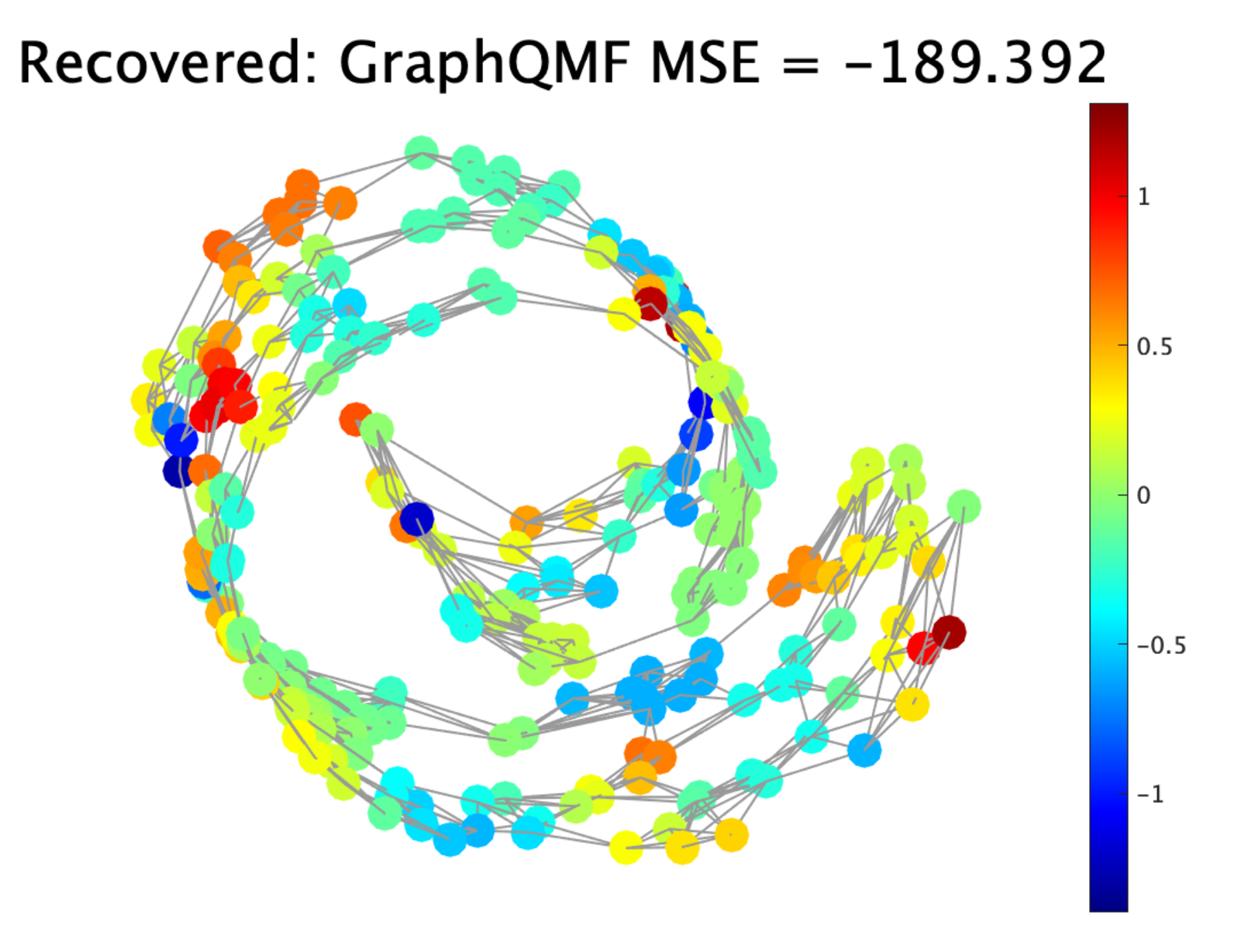} \label{ubp_graphQMF_com}}
 \subfigure[][GraphBior]
  {\centering\includegraphics[width=0.47\linewidth]{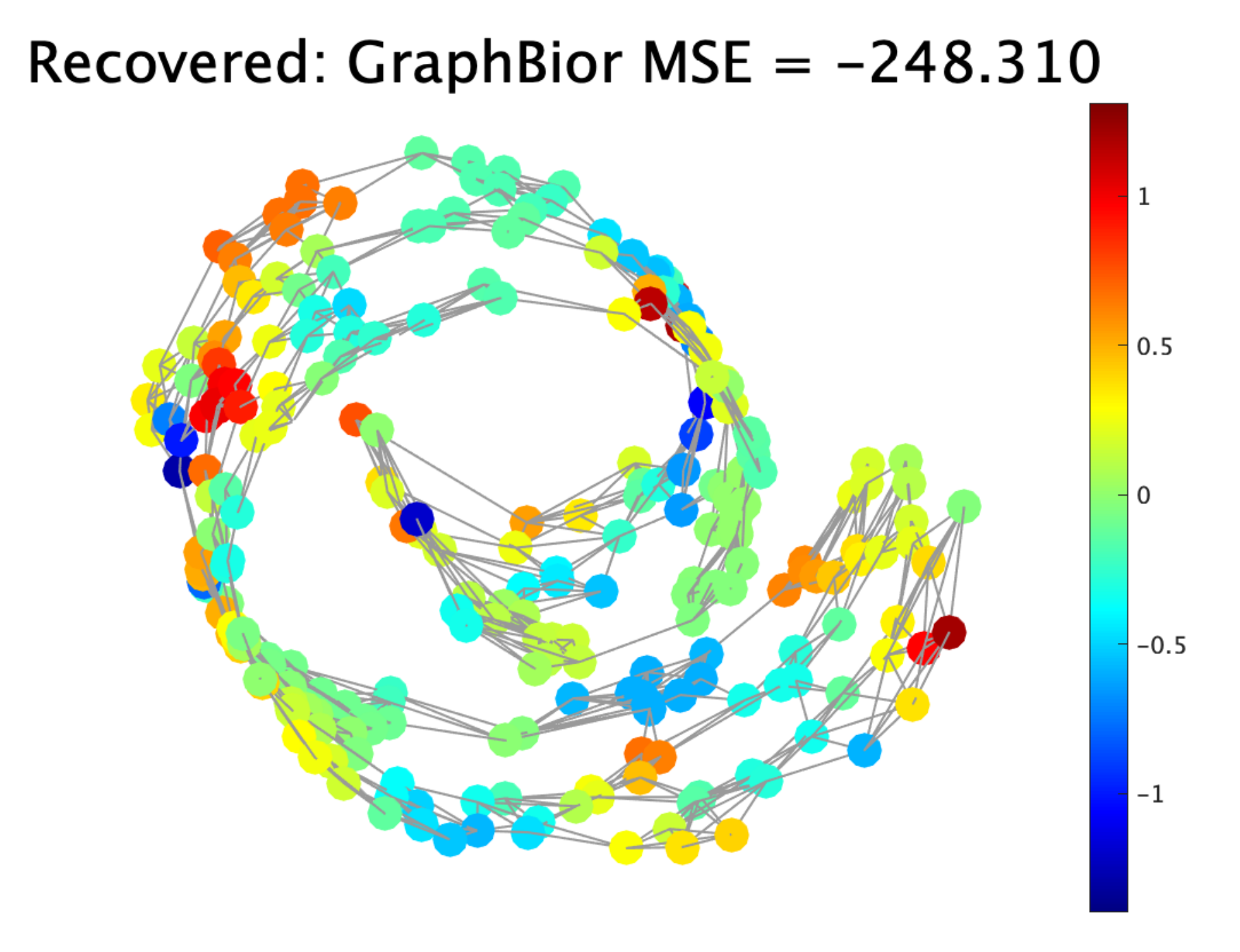} \label{ubp_graphBior_com}}
\caption{Examples of recovery for UBP graph signals on swiss roll graphs.}
\label{exp:multi_recov_ubp_com}
\end{figure}

\subsubsection{RESULTS}
The reconstruction MSEs in decibels are summarized in Table \ref{tab:ave_mse} and examples of the reconstructed graph signals are visualized in Figs. \ref{exp:multi_recov_pws}-\ref{exp:multi_recov_ubp_com}.
As observed in Table \ref{tab:ave_mse}, the proposed MCS best recovers graph signals for both signal models and both graphs. The single-channel sampling failed to recover. That is, the conventional single-channel sampling does not work well for a mixture of multiple graph signal models.
GraphQMF presents the low MSE but it involves slight errors caused by polynomial approximations of graph filters. 
GraphBior and the proposed method can be regarded as PR in machine precision
but our method presents the lowest MSE for all methods regardless of the graphs.

Note that our MCS can achieve PR with a user-specified graph operator, while the other methods require to the use of the normalized graph Laplacian or graph simplification prior to sampling.

\subsection{REAL-WORLD GRAPH}
We then perform a sampling experiment for a real-world graph.

\subsubsection{SETUP}

We utilize a traffic network dataset from the Caltrans Performance Measurement System\footnote{This dataset is publicly available http://pems.dot.ca.gov.}. Vertices represent stations of 17 highways in Alameda county, CA, where $N=593$. Edges connect the vertices if the stations are adjacent on the same highway or if there is a junction close to the stations on different highways. Hereafter, we refer to this graph as Alameda graph.

We consider a synthetic signal on Alameda graph to objectively measure the reconstruction quality since there is no ground-truth available\footnote{
Note that estimating generator functions (both for single- and multi-channel cases) is an open problem while some recent studies are undergoing \cite{hara_graph_2022,behjat_signal-adapted_2016}.}. In this experiment, we generate PWS graph signals according to \eqref{eq:pws_model} where we divide Alameda graph into three clusters with spectral clustering \cite{von_luxburg_tutorial_2007}.
The sampling ratio is set to $K=|\mathcal{M}_0|=297$. Analysis filters are the same as the previous experiment.

We calculate the average MSE of reconstructed graph signals for 30 independent runs, and compare it with the existing methods from the previous experiment. 

\subsubsection{RESULTS}

The MSEs in decibels are summarized in Table \ref{tab:ave_mse_real} and reconstructed graph signals are visualized in Fig. \ref{exp:multi_recov_pws_alameda}. Similar to the previous experiment, we observe that the proposed MCS exhibits the lowest MSEs among all the methods. This suggests that our MCS can be applied to real-world datasets, regardless of the topology of graphs.

\begin{figure}[t!]
\centering
 \subfigure[][Original]
  {\centering\includegraphics[width=0.47\linewidth]{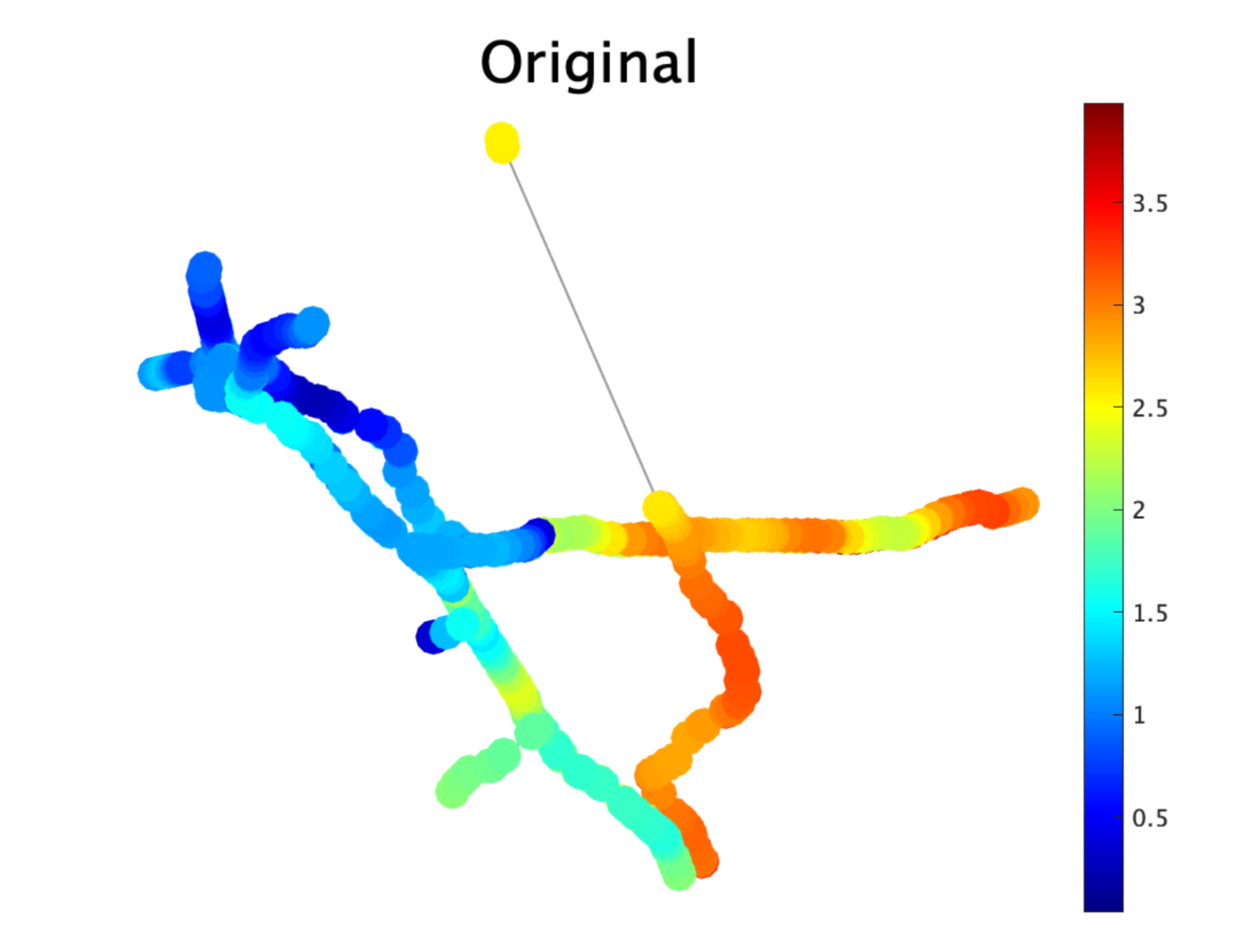} \label{mbp_ori_com}}
 \subfigure[][Proposed]
  {\centering\includegraphics[width=0.47\linewidth]{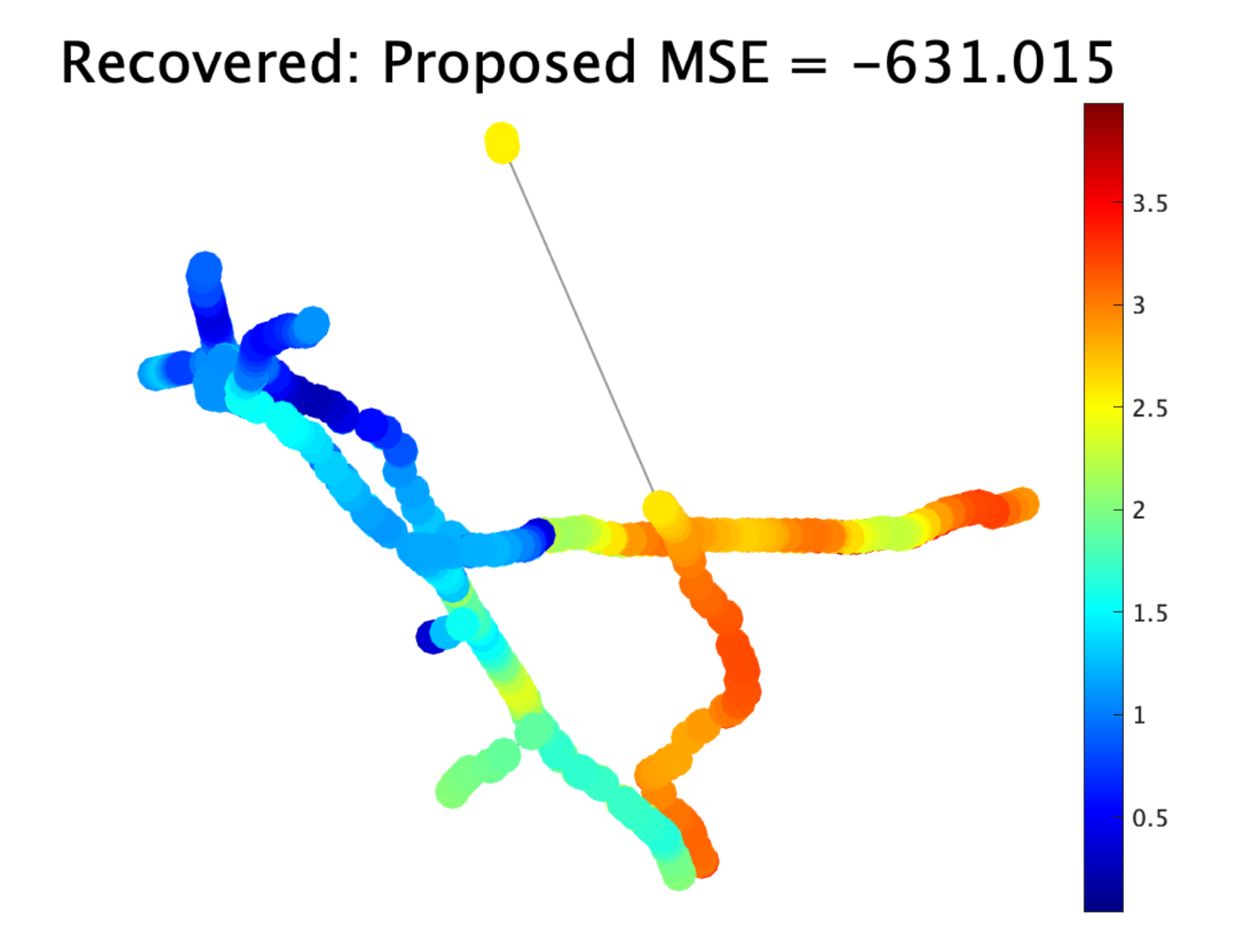} \label{pws_prop_com}}
 \subfigure[][Recon. w/ ch. 1]
  {\centering\includegraphics[width=0.47\linewidth]{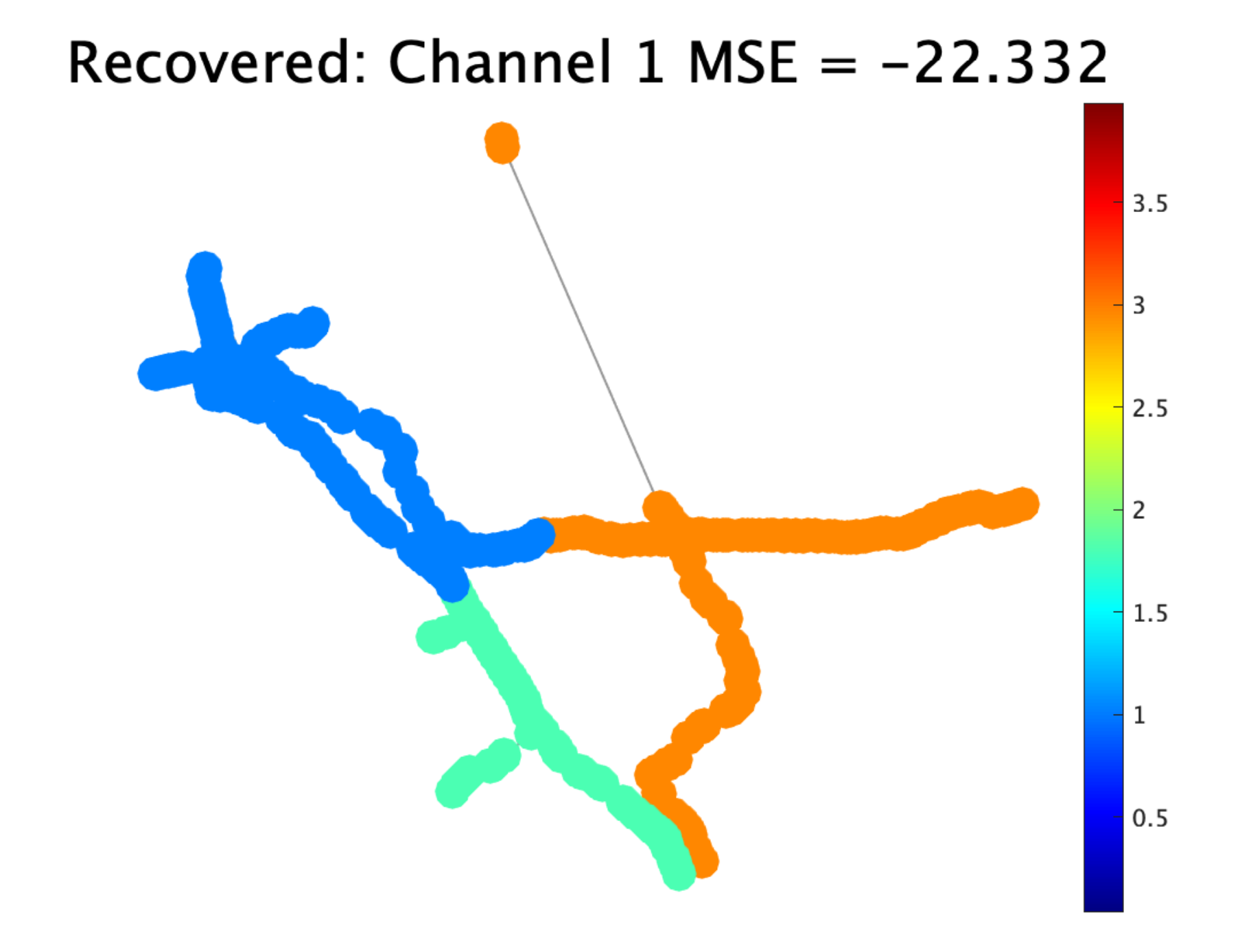} \label{pws_ch1_com}}
 \subfigure[][Recon. w/ ch. 2]
  {\centering\includegraphics[width=0.47\linewidth]{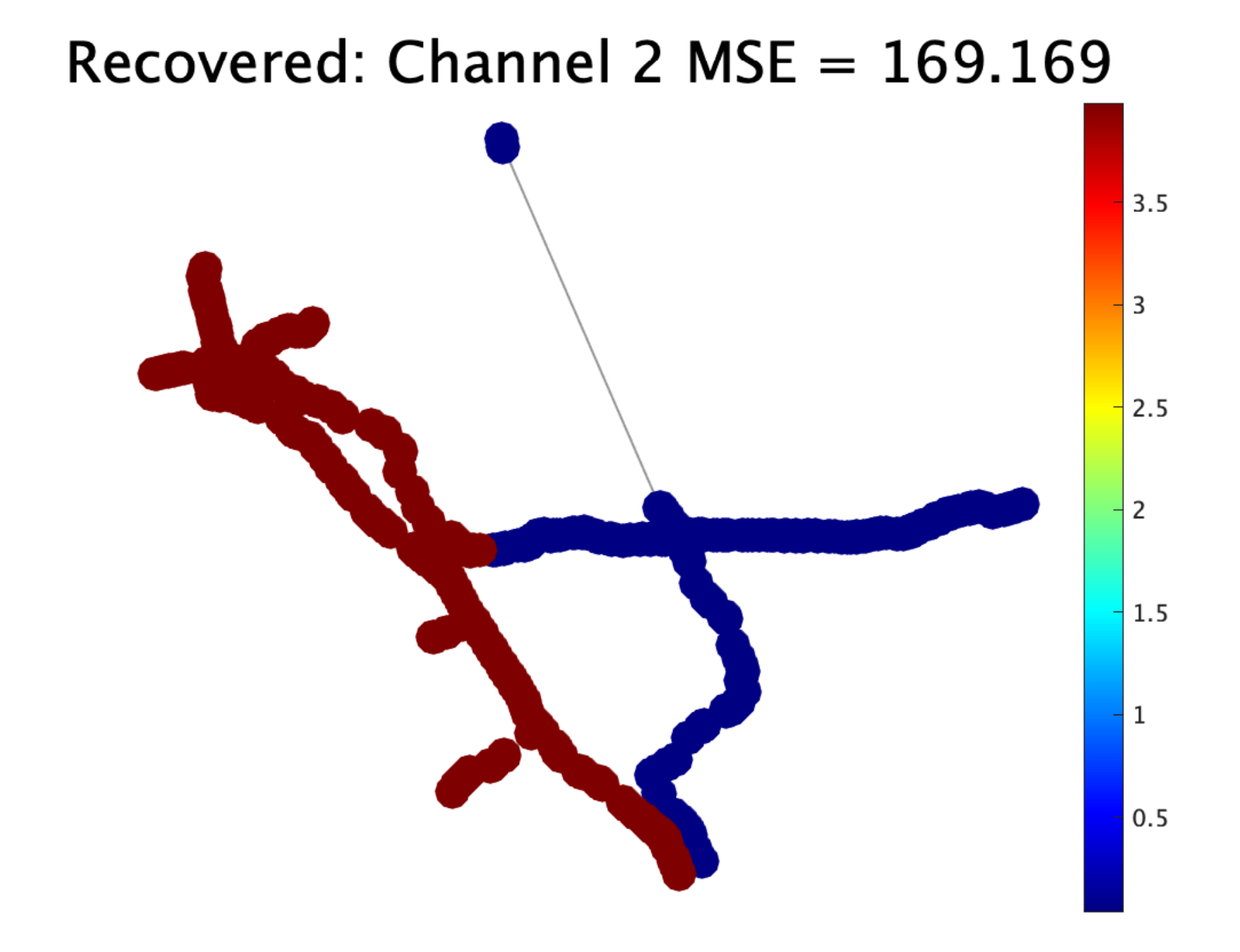} \label{pws_ch2_com}}
 \subfigure[][GraphQMF]
  {\centering\includegraphics[width=0.47\linewidth]{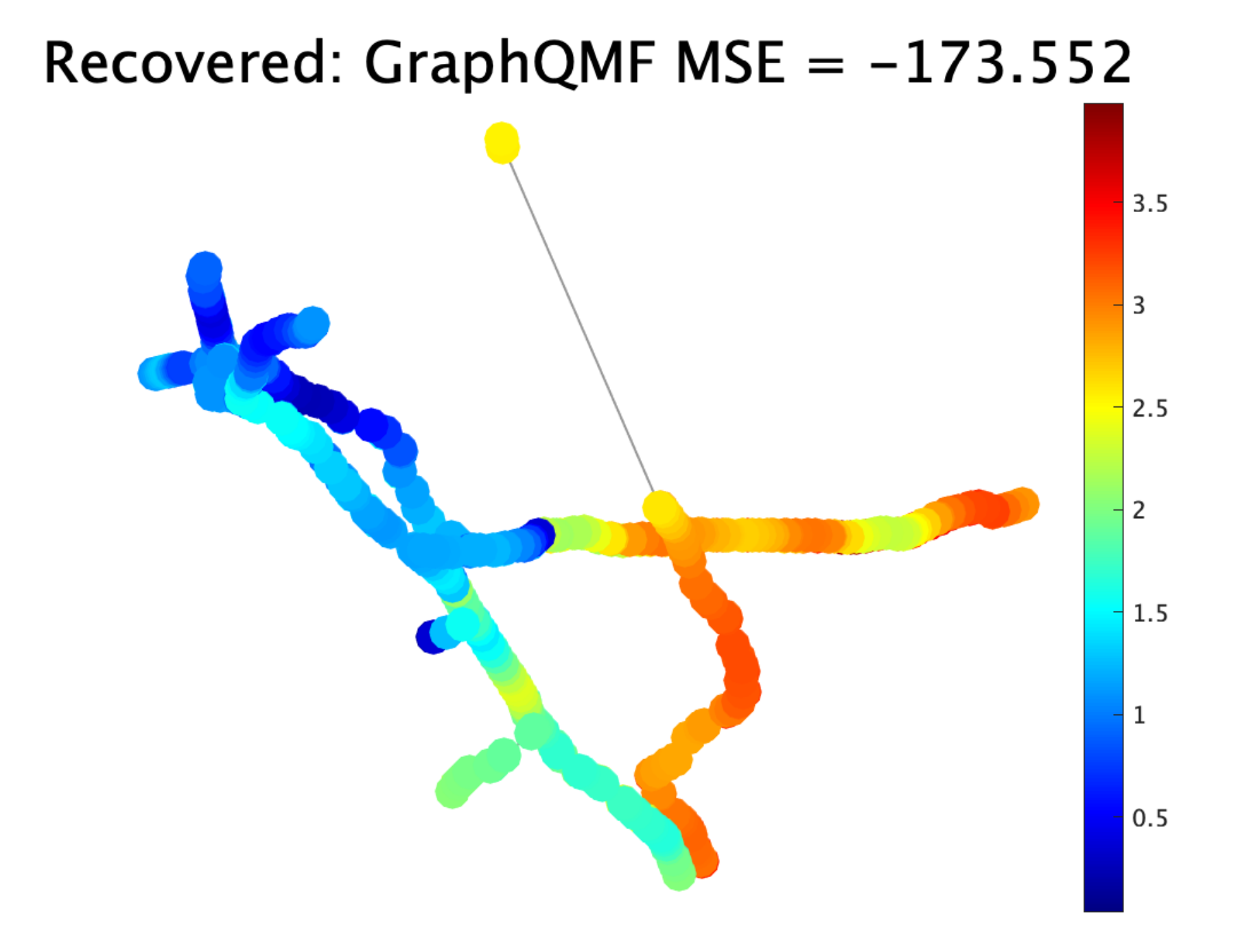} \label{pws_graphQMF_com}}
 \subfigure[][GraphBior]
  {\centering\includegraphics[width=0.47\linewidth]{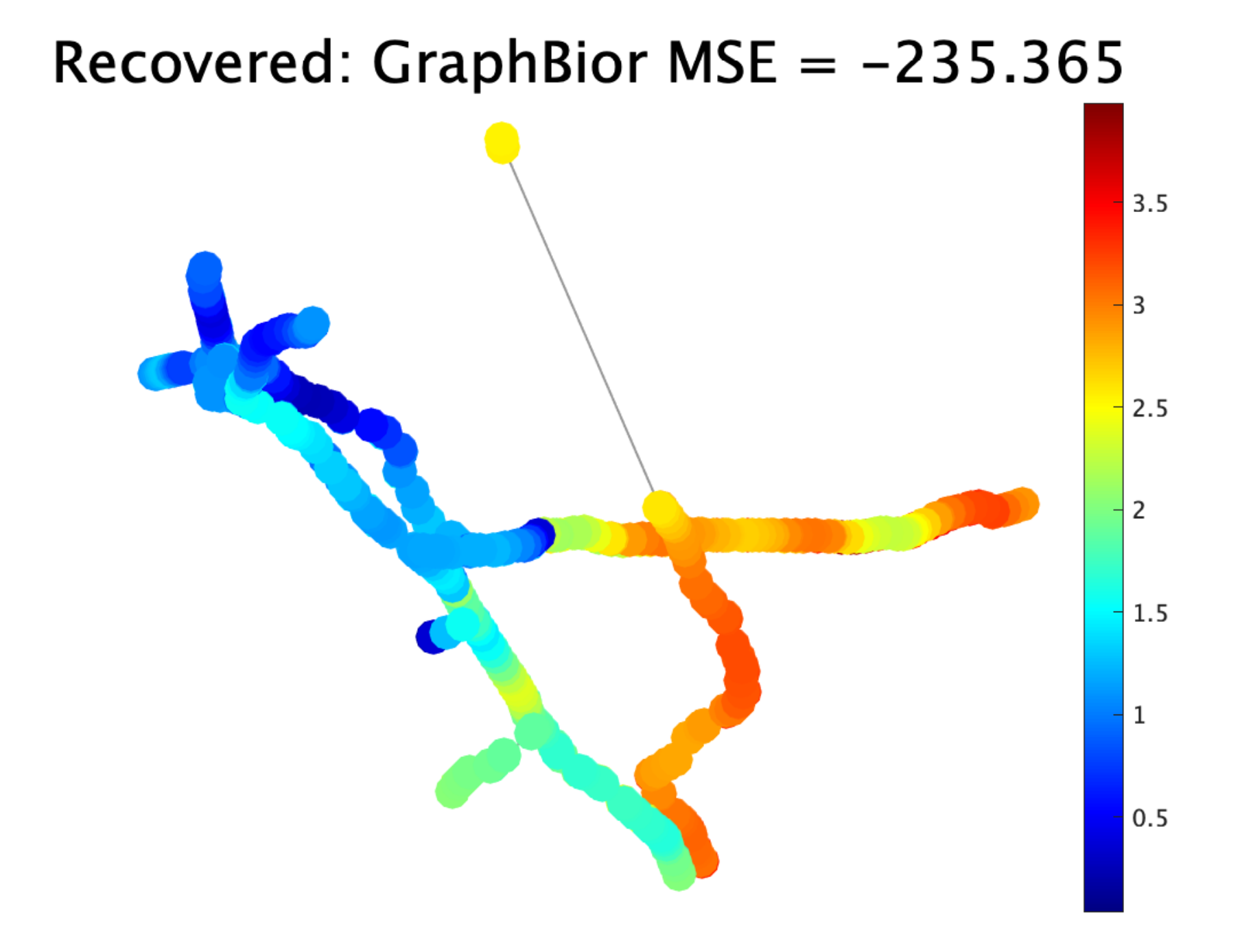} \label{pws_graphBior_com}}
\caption{Examples of recovery for PWS graph signals on Alameda graph.}
\label{exp:multi_recov_pws_alameda}
\end{figure}

\begin{table}[t!]
\centering
\caption{Average MSEs of 30 independent runs for recovery on Alameda graph. }\label{tab:ave_mse_real}
\setlength{\tabcolsep}{0.5em}
\begin{tabular}{ccccc}
\hline
Proposed & \begin{tabular}[c]{@{}c@{}}Recon. \\ w/ ch.~1\end{tabular} & \begin{tabular}[c]{@{}c@{}}Recon. \\ w/ ch.~2\end{tabular} & GraphQMF & GraphBior \\ \hline\hline
\textbf{-633.29} & -20.73 & 196.12 & -172.59 & -234.86 \\\hline
\end{tabular}
\end{table}

\section{CONCLUSION}

In this paper, we develop the first MCS framework for graph signals, by extending the single-channel graph signal sampling to the multi-channel setting. We present a SSS method for our MCS such that graph signals are best recovered. We reveal that existing BGFBs are a special case of the proposed MCS. We demonstrate the effectiveness of the proposed method by showing that our MCS outperforms the existing BGFBs and the single-channel sampling.

\bibliographystyle{IEEEtran}
\bibliography{IEEEabrv,multi_channel_samp_ojsp_cleaned}

\end{document}